\numberwithin{equation}{section}
\newtheorem{theorem}{Theorem}
\newtheorem{corollary}{Corollary}
\theoremstyle{plain}
\newcommand{\La}{\mathfrak}
\newcommand{\Lg}{\mathsf}
\newcommand{\N}[1]{\mathbb{#1}}
\newcommand{\nn}{\nonumber}
\newcommand{\mc}{\mathcal}
\newcommand{\on}{\operatorname}
\newcommand{\pr}{\operatorname{pr}}
\begin{document}
\begin{titlepage}
\begin{center}

{\large \bf {Lie symmetry analysis of the two-Higgs-doublet model field equations}}

\vskip 1cm

M. Aa. Solberg\footnote{E-mail: marius.solberg@ntnu.no} 

\vspace{1.0cm}

Department of Structural Engineering, \\ Norwegian University of Science and Technology, \\
Trondheim, Norway\\

\end{center}

\vskip 3cm

\begin{abstract}
    We apply Lie symmetry analysis of partial differential equations (PDEs) to the Euler-Lagrange equations
		of the two-Higgs-doublet model (2HDM), to determine its scalar Lie point symmetries. A Lie point symmetry is a structure-preserving transformation of the spacetime variables and the fields of the model, which is also continuous and connected to the identity. Symmetries of PDEs may, in general, be divided into strict variational, divergence and non-variational symmetries, where the first two are collectively referred to as variational symmetries. Variational symmetries are usually 
	preserved under quantization, and variational Lie symmetries yield conservation laws. We demonstrate that there are no scalar Lie point divergence symmetries or non-variational Lie point symmetries in the 2HDM, and re-derive its well-known strict variational Lie point symmetries, thus confirming the consistency of our implementation of Lie's method.
  	Moreover, we prove three general results that may simplify Lie symmetry calculations for a wide class of particle physics models. Lie symmetry analysis of PDEs is broadly applicable 
	for determining Lie symmetries. As demonstrated in this work, the method can be applied to models with many variables, parameters, and reparametrization freedom, while any missing discrete symmetries can be identified through the automorphism groups of the resulting Lie symmetry algebras.
\end{abstract}

\end{titlepage}

\setcounter{footnote}{0}

\tableofcontents

%%%%%%%%%%%%%%%%%%%%%%%%%%%%%%%%%%%%%%%%%%%%%%%%
\section{Introduction}
\label{sect:intro}
%%%%%%%%%%%%%%%%%%%%%%%%%%%%%%%%%%%%%%%%%%%%%%%%
The discovery of the Standard Model (SM) Higgs boson at the LHC in 2012 by the
ATLAS \cite{ATLAS:2012yve} and CMS \cite{CMS:2012qbp} collaborations has
motivated a broad exploration of theories with extended scalar sectors. Such extensions may help explain the observed matter-antimatter asymmetry of
the universe, starting from Sakharov's conditions for baryogenesis
\cite{Sakharov:1967dj} and their possible realization in electroweak
baryogenesis scenarios. In particular, multi-Higgs-doublet models such as two-Higgs-doublet models
(2HDMs) \cite{Turok:1990zg} and three-Higgs-doublet models (3HDMs)
\cite{Weinberg:1976hu} can provide additional sources of $\on{CP}$ violation
and may strengthen the electroweak phase transition.

 Extended scalar sectors can also provide viable dark matter candidates. Gauge
singlet scalars offer a minimal realization \cite{McDonald:1993ex}, whereas
inert-doublet models provide an alternative framework \cite{Barbieri_2006}.
The inert-doublet scenario has been analysed in detail in \cite{Honorez_2007},
while associated gamma-ray line signatures were investigated in
\cite{Gustafsson_2007}.
 Additional
realizations arise in mixed doublet--singlet models, such as singlet-extended
2HDMs \cite{Drozd:2014yla}, and in multi-doublet constructions with
stabilizing symmetries, e.g.\ $\Lg{U}(1)$-stabilized 3HDMs \cite{Kun_inas_2024}.
As indicated, phenomenological studies have largely focused on additional
complex $\Lg{SU}(2)_L$ doublets \cite{ROSS1975135} and on scalar gauge
singlets. A comprehensive textbook treatment is given in \cite{Gunion:1989we},
while a more recent review can be found in \cite{Ivanov_2017}.

The introduction of such fields greatly enlarges the number of viable
particle physics models. One of the most important characteristics of a model
is its set of symmetries. By a symmetry we mean a structure-preserving
transformation of objects such as field equations, Lagrangians, and action
integrals.
Symmetries may imply conservation laws, relate apparently distinct models, reduce the number of free parameters, protect parameters from large quantum corrections, ensure that parameter relations are stable under quantum corrections and stabilize 
dark matter candidates against decay.    

Symmetry properties have been studied extensively for multi-Higgs-doublet
models (NHDMs), with the 2HDM as the simplest and most widely analysed
example. In particular, the 2HDM can accommodate $\on{CP}$ violation in the
scalar sector \cite{Lee:1973iz}, making it a natural framework for scenarios
of electroweak baryogenesis. It also plays a central role in dark matter
model building and in supersymmetric extensions of the SM. A two-Higgs-doublet
structure is, for example, an essential ingredient of minimal supersymmetric
models; see \cite{Fayet:1975yh} and the subsequent discussion in
\cite{FLORES198395}, as well as the detailed analysis in
\cite{GUNION19861}.

Continuous symmetries connected to the identity, also known as \textit{Lie symmetries}, may be studied in general by a method originally introduced by Norwegian mathematician Sophus Lie \cite{Lie1893}, and subsequently developed by his successors.
The method, Lie symmetry analysis of partial differential equations (PDEs), can be utilized to find all infinitesimal (sometimes called "local") symmetries of a system of PDEs $\Delta$, which correspond to the Lie symmetry algebra $\La{g}$ of $\Delta$. This method is the original context in which the concepts now known as Lie algebras and Lie groups emerged \cite{Lie1893}. The Lie symmetry algebra $\La{g}$ of $\Delta$ determines, to a large extent, the maximal global symmetry group $\Lg{G}$ of $\Delta$, because the Lie algebra %$\on{Lie}(\Lg{G})$ 
 of $\Lg{G}$ must equal $\La{g}$. Lie symmetry analysis  may be applied to almost any kind of system of PDEs, for example a system of Euler-Lagrange equations $E(\mc{L})=0$ for a theory given by a Lagrangian $\mc{L}$. When the dependent variables of $\mc{L}$ are fields, the Euler-Lagrange equations are also referred to as the \textit{field equations} of the field theory given by $\mc{L}$. Particle physics models, such as multi-Higgs models, are examples of field theories given by Lagrangians. 

The field equations follow from the Lagrangian $\mc{L}$ via a variational principle. Consider the action $\mc{S}$, 
\begin{align}
	\mc{S}=\int_\Omega \mc{L} d^4 x.
\end{align}  
Then, by varying the fields $y^i$ of $\mc{L}$ infinitesimally, with vanishing 
variations $\delta y^i$ on the boundary $\partial \Omega$ of $\Omega$, and by demanding stationary values of $\mc{S}$ to first order of the infinitesimal parameter, the result is the %equations of motion, also referred to as the 
Euler-Lagrange equations. For a Lagrangian containing at most first-order derivatives (as in the 2HDM), the Euler-Lagrange equations read
\begin{align}\label{E:feqs4d}
	\frac{\partial \mc{L}}{\partial y^i}-d_\mu \frac{\partial \mc{L}}{\partial y^i_{,\mu}}=0,
\end{align}
   for each field (i.e.~dependent variable) $y^i$ and derivative $y^i_{,\mu}\equiv \partial y^i/\partial x^\mu$ occurring in $\mc{L}$, where $d_\mu\equiv d/dx^\mu$ is a total derivative. Then, there are three types of symmetries
		of the field equations \eqref{E:feqs4d}: First, the \textit{strict variational symmetries} (SVSs) that leave the action $\mc{S}$ strictly invariant, that is,
 \begin{align}
	\Delta \mc{S} \equiv \hat{\mc{S}}-\mc{S}\equiv 0, 	
\end{align}
	where $\hat{\mc{S}}$ is the action transformed by the symmetry.
		Second, the \textit{divergence symmetries}, where 
		\begin{align}\label{E:introDivSym}
		\Delta \mc{S}= \int_\Omega d_\mu \beta^\mu d^4x=\int_{\partial \Omega}\beta^\mu dF_\mu=0,  	
		\end{align}
		is the integral of a total divergence $d_\mu \beta^\mu$, with 
		non-vanishing, local functions $\beta^\mu$. %, and where $\partial \Omega$ is the boundary of $\Omega$.
	Here, $\Delta \mc{S}$ is, by the divergence theorem, converted to a boundary term, which vanishes under the usual boundary conditions on the fields, as shown in the last equality of \eqref{E:introDivSym}. These first two types of symmetries are the \textit{variational symmetries}, symmetries
		of the action $\mc{S}$. Variational symmetries are typically preserved in the corresponding quantum theory, provided both the action $\mc{S}$
 and the measure of Feynman’s path integral remain invariant. Moreover, variational Lie symmetries lead to conserved currents by Noether's theorem \cite{Noether1918}.  
		The third type of symmetries are the \textit{non-variational symmetries} that preserve the structure of the field equations, but do not leave the action $\mc{S}$ invariant. 
		
%Continuous symmetries connected to the identity are known as
%\textit{Lie symmetries}. 
When a symmetry only transforms
free and dependent variables $(x,y)$, it is called a \textit{point symmetry}. The derivatives of the dependent variables are also transformed under a point symmetry, however, these transformations are dictated by the transformations of the variables $(x,y)$. This implies that a point symmetry cannot, for instance, interchange a dependent variable with its derivative.		
%		In this article we will investigate the possibility of having Lie point divergence symmetries and Lie point non-variational symmetries in the 2HDM.

The possible, inequivalent (real) scalar point SVS groups of the 2HDM were determined in \cite{PhysRevD.77.015017}, using a formalism of gauge invariant scalar bilinears. 
A new discrete point transformation in the 2HDM that leads to parameter relations respected by quantum corrections, characteristic of variational symmetries, was identified in 
\cite{Ferreira:2023dke}. As suggested in \cite{Ferreira:2023dke}, this transformation may be regarded as a complex, discrete symmetry. Such transformations may be a generic feature of quantum field theories, although they do not necessarily need to be interpreted as complex symmetries \cite{Trautner:2025yxz}. However, complex symmetries of real structures are not unknown in the context of symmetry analysis of differential equations, see e.g.~\cite{hydon2000symmetry} or \cite{Gray2013AutomorphismsOL} for examples of complex, discrete symmetries of \textsl{real} differential equations.
An overview of the 2HDM can be found in \cite{Branco_2012}.    
The possible scalar point SVSs of the 3HDM have been classified in stages:
Abelian symmetry groups were analysed in \cite{Ivanov_2012}, and the full
classification was completed in \cite{Ivanov_2013}. The corresponding
categorization of scalar symmetry groups in the 4HDM has only recently been
initiated, with extensions of cyclic groups studied in \cite{Shao_2023}
and extensions by rephasing groups in \cite{Shao_2024}. In these works,
finite group theory plays a central role.

The scalar point SVSs of the 2HDM and 3HDM, including custodial symmetries,
have also been classified. For the 2HDM, this programme was initiated in
\cite{Battye:2011jj} and completed in \cite{Pilaftsis:2011ed}. For
the 3HDM, the corresponding classification was obtained in
\cite{Darvishi:2019dbh}, using bilinear and tensor-product methods. Finally, the scalar, Lie point SVSs of the general NHDM kinetic terms were determined in \cite{Olaussen:2010aq}, including the corresponding symmetries in the custodial limit.  

The purpose of this work is twofold: First, inspired by
the aforementioned discovery in \cite{Ferreira:2023dke} of a new, complex 2HDM symmetry,  
we want to investigate the possibility of having Lie point divergence and non-variational symmetries in a 2HDM. Because all Higgs doublets carry the same quantum numbers (i.e., the same isospin and hypercharge), unitary linear combinations of the original doublets, accompanied by appropriate redefinitions of parameters to keep the Lagrangian invariant, relate different parameterizations that describe the same physics. This phenomenon is known as basis freedom or \textit{reparametrization freedom}. In our analysis, we will choose specific bases where certain parameters vanish, in order to minimize the occurrence of equivalent symmetries across different bases.
Second, we would like to demonstrate how to classify 
all Lie point symmetries in models with many variables, parameters and reparametrization freedom, like the 2HDM. Both of these aims will be pursued by applying Lie symmetry analysis to the field equations of the general 2HDM.

% , and the aim in this article is to apply Lie symmetry analysis to the field equations of the 2HDM, to categorize the possible inequivalent Lie point symmetries of the 2HDM, including divergence and non-variational symmetries.

\paragraph{Structure of the paper}
\label{sec:StructureOfThePaper}
Sections \ref{sec:PointSymmetriesOfSystemsOfPDEs}-\ref{sec:SymmetriesOfTheActionMcS} review the relevant theory of Lie symmetry analysis of PDEs, while we present three new results in Sections \ref{sec:MultiHiggsModels} and \ref{sec:TheScalarSymmetriesAreTheSameInAnySpacetimeDimension}, where the first result, Theorem \ref{P:prXannTannV}, is crucial for Section \ref{sec:ParameterCasesAndReductionsOfThePotential}. 
We then provide a simple, concrete example of how the theory of Sections \ref{sec:PointSymmetriesOfSystemsOfPDEs}-\ref{sec:TheScalarSymmetriesAreTheSameInAnySpacetimeDimension} may be applied in Section \ref{sec:ASimpleExamplePhi4Theory}. Section \ref{sec:ASimpleExamplePhi4Theory} is not necessary for
reading Section \ref{sec:2HDM}, and may be skipped by readers
familiar with Lie symmetry analysis of PDEs or readers who want to go straight to the main application in Section \ref{sec:2HDM}.
In Section 
  \ref{sec:2HDM}, we recall standard formalism of the 2HDM,
		and in Sections \ref{sec:FindingAndSolvingTheDeterminingEquations}-\ref{sec:ParameterCasesAndReductionsOfThePotential} we perform the Lie symmetry analysis of the 2HDM. A summary and outlook is given in Section \ref{sec:SummaryAndOutlook}, while a proof of a result of Section \ref{sec:TheScalarSymmetriesAreTheSameInAnySpacetimeDimension}, Proposition \ref{P:SimplifiedModel}, is delegated to Appendix \ref{sec:ProofOfProposition}.
\paragraph{Conventions and notation}
\label{sec:ConventionsAndNotation}
In this article we adopt the mathematicians' convention for Lie algebras, which means that a matrix Lie group $\Lg{G}$ with Lie algebra $\La{g}$
is generated by $\exp(\La{g})$, and not by 
$\exp(i\La{g})$ which would correspond to the physicists' convention. Moreover, 
$d_\mu\equiv d/dx^\mu$ denotes the total derivative, whereas 
$D_\mu$ is reserved for the covariant derivative \eqref{E:covarDerDef}, unless the index is a multi-index $J$, see Section \ref{sec:InfinitesimalGenerators}, or a function $P$, implying a Fréchet derivative $D_P$. Repeated indices are implicitly summed over,
while a check %$\check{}$ 
over the indices implies that the summation convention is dispensed so that, 
for instance, $\eta^{\widecheck{\imath}} \partial_{\phi_{\widecheck{\imath}}}$ only consists of one term. 
Finally, we will use
respectively $\Re$ and $\Im$ for real and imaginary parts of expressions, e.g.~will $\lambda_5=\Re(\lambda_5)+i\, \Im(\lambda_5)$.
%%%%%%%%%%%%%%%%%%%%%%%%%%%%%%%%%%%%%%%%%%%%%%%%
\section{Lie symmetry analysis of PDEs}
\label{sec:LieSymmetryAnalysisOfPDEs}
%%%%%%%%%%%%%%%%%%%%%%%%%%%%%%%%%%%%%%%%%%%%%%%%
In this section, we provide a brief overview of the Lie symmetry theory
for systems of PDEs. A classical introduction is given by Olver
\cite{olver1998applications}, with a more concise account in his lecture
notes \cite{OlverLecturesLGaDE}. Further treatments are available in the
textbooks by Hydon \cite{hydon2000symmetry}, Bluman and Kumei
\cite{bluman2010applications}, and Cantwell \cite{cantwell2002introduction}.

\subsection{Point symmetries of systems of PDEs}
\label{sec:PointSymmetriesOfSystemsOfPDEs}

Consider an $n$th-order system of PDEs given by
\begin{align}\label{E:nthOrdPDEs}
	\Delta_i(x,y,y^{(1)},\ldots, y^{(n)})=0, \quad \forall i\in \{1,\ldots,m\}, 
\end{align}
with $d$ independent variables $x=(x^0,\ldots, x^{d-1})$ (spacetime coordinates) and
$q$ dependent variables $y=(y^1,\ldots,y^q)$ (the fields), where
$y^{(k)}$ denotes derivatives of order $k$ of the dependent variables $y^j$ with respect to the independent variables $x^\mu$. We abbreviate \eqref{E:nthOrdPDEs} as $\Delta=0$ or just $\Delta$.

Then, a point symmetry $S$ of the system of PDEs \eqref{E:nthOrdPDEs}
is a diffeomorphism\footnote{Which means that $S$ and $S^{-1}$ exist and have at least continuous first-order derivatives, but are often taken to be smooth, that is, $C^\infty$. %($S$ and $S^{-1}$ are $C^1$ functions).
} on the space of variables (i.e.,~it is a point transformation), which maps solutions of \eqref{E:nthOrdPDEs}
to solutions. More precisely, this means
\begin{align}
	S: U\subset \N{R}^{d+q} \to \N{R}^{d+q},\quad S\big((x,y)\big)= (\hat{x},\hat{y})
\end{align}
for an open set $U$, such that the transformed system of PDEs,
\begin{align}\label{E:nthOrdPDEsTransf}
	\Delta_i(\hat{x},\hat{y},\hat{y}^{(1)},\ldots, \hat{y}^{(n)})=0, \quad \forall i\in \{1,\ldots,m\}, 
\end{align}
 holds whenever equation \eqref{E:nthOrdPDEs} holds. In \eqref{E:nthOrdPDEsTransf}, the action of $S$ is prolonged to the derivatives, such that they are mapped to the corresponding derivatives in the transformed variables, that is,
\begin{align}
	S(\frac{d^k {y}^i}{d{x}^{\mu_1}\cdots d{x}^{\mu_k}})=\frac{d^k \hat{y}^i}{d\hat{x}^{\mu_1}\cdots d\hat{x}^{\mu_k}},
\end{align}
to preserve the structure of the original system of PDEs \eqref{E:nthOrdPDEs}. 
The condition for a diffeomorphism $S$ to be a symmetry of the system \eqref{E:nthOrdPDEs}
may now be expressed in compact form as 
\begin{align}
	\Delta =0 \Rightarrow \hat{\Delta} =0,
\end{align}
where $\hat{\Delta}\equiv \Delta(\hat{z})$, with $z$ representing all
 the arguments of $\Delta$, including the derivatives.

The expressions $\Delta_i$ in \eqref{E:nthOrdPDEs} can also be regarded as
functions on the \emph{$n$th jet space} $J^n$, which in the present setting
can be identified with Euclidean space $\mathbb{R}^s$ with formal coordinates
$(x,y,\ldots,y^{(n)})$ corresponding to all independent and dependent variables
and their distinct derivatives up to order $n$. This implies
$s = d + q\binom{d+n}{n}$. Each PDE can then be viewed as a smooth map
\begin{align}
  \Delta_i : J^n \to \mathbb{R},
\end{align}
and the system \eqref{E:nthOrdPDEs} can be described by the solution submanifold
$\mathcal{M}_\Delta \subset J^n$,
\begin{align}\label{E:solManifold}
  \mathcal{M}_\Delta
  = \bigl\{ (x,y,\ldots,y^{(n)}) \in J^n \;\big|\;
     \Delta_i(x,y,\ldots,y^{(n)}) = 0 \ \text{for all } i  \bigr\},
\end{align}
which consists of all points $(x,y,\ldots,y^{(n)}) \in J^n$ that satisfy the system. 

\subsection{Infinitesimal generators and their prolongations}
\label{sec:InfinitesimalGenerators}
An infinitesimal generator of a point transformation 
is a vector field
\begin{align}\label{E:infGenPoint}
	X= \xi^\mu(x,y)\frac{\partial}{\partial x^\mu} + \eta^i(x,y)\frac{\partial}{\partial y^i}, 
\end{align}
where $\mu$ is implicitly summed from $0$ to $d-1$ and likewise $i$ is summed from $1$ to $q$. The infinitesimal generator of a point symmetry $S$ shows
how an infinitesimal version of $S$ acts on the variables $z=(x,y)$,
\begin{align}
	S(z)=\hat{z}=z+\epsilon X(z),
\end{align}
  for an infinitesimal $\epsilon$. 
	The $k$th prolongation of $X$, denoted $\operatorname{pr}^{(k)}X$, is the
vector field on the $k$th jet space $J^k$ obtained by extending the action of 
$X$ to the derivatives of the
dependent variables, namely
\begin{align}\label{E:kProlong}
	\on{pr}^{(k)}X= X+ \sum_{1\leq|J|\leq k} \eta_J^i\frac{\partial}{\partial y^i_J},   
\end{align}
 where the multi-index $J=(j_0,\ldots,j_{d-1})$ encodes the distinct derivatives,
with norm
\begin{align}\label{E:|J|}
|J|=j_0+\cdots+j_{d-1},
\end{align}
where the derivatives of \eqref{E:kProlong} are given by
\begin{align}\label{E:y^i_J}
	y^i_J\equiv \frac{\partial^{|J|} y^i}{(\partial x^0)^{j_0}\cdots(\partial x^{d-1})^{j_{d-1}}},
\end{align}
with coefficients
\begin{align}
	\eta_J^i=D_J(Q^i)+\xi^\mu \frac{\partial y^i_J}{\partial x^\mu}
\end{align}
where the iterated total derivative 
\begin{align}\label{E:itTotder}
	D_J= (\frac{d}{dx^0})^{j_0}\cdots  (\frac{d}{dx^{d-1}})^{j_{d-1}},
\end{align}
and the characteristic
\begin{align}\label{E:characteristicDef}
	Q^i=\eta^i -\xi^\mu \frac{\partial y^i}{\partial x^\mu}.
\end{align}
We define the (infinite) prolongation
\begin{align}
  \operatorname{pr} X \equiv \operatorname{pr}^{(\infty)} X
\end{align}
as the formal vector field obtained by extending \eqref{E:kProlong} to all
derivatives of arbitrary order, so that $\operatorname{pr} X$ is the extension
of $X$ to the infinite jet space $J^\infty$.
If $\xi^\mu=0$ for all $\mu$ the infinitesimal point transformation generator $X$ is said to be in \textit{evolutionary form}, and for a general $X$ the infinitesimal generator 
\begin{align}\label{E:evolRepr}
	X_Q=Q^i\partial_{y^i}
\end{align}
 is called its \textit{evolutionary representative}.
 In the case $Q$ includes derivatives, $X_Q$ will be a so-called generalized symmetry of the system of PDEs if $X$ is a symmetry thereof \cite{olver1998applications}, but we will only consider cases where $X_Q$ is a point symmetry due to $\xi=0$ in \eqref{E:characteristicDef}.
 For a point transformation given by $X$ already in evolutionary form ($\xi=0$), its prolongation simply becomes
\begin{align}
	\on{pr}X= \on{pr}X_Q =\sum_{i,J}(D_JQ^i) \frac{\partial}{\partial y^i_J}=  \sum_{i,J}(D_J\eta^i) \frac{\partial}{\partial y^i_J}.
\end{align} 
In case of a very simple generator $X=y \partial_y$ and only one independent variable $x$ the 1-prolongation $\on{pr}^{(1)}(X)=y \partial_y+ y' \partial_{y'}$, and it hence just extends the infinitesimal transformation $y\to (1+\epsilon X) y= y+\epsilon y$ to the first-order derivative, since $y'\to (1+\epsilon \on{pr}^{(1)}(X)) y'= y'+\epsilon y'$. If the independent variables also transform, that is, $\xi^\mu \ne 0$ for some $\mu$, the prolongations may become much more complicated, as testified by the formulas above.

Now, let $\Lg{G}$ be a connected Lie group acting (locally) on $\mathbb{R}^{d+q}$
with coordinates $(x,y)$, where $x \in \mathbb{R}^d$ are the independent
variables and $y \in \mathbb{R}^q$ the dependent variables, and let $\La{g}$
be its Lie algebra. Then $\Lg{G}$ is a symmetry group\footnote{Possibly a
local group, i.e.~only defined in a neighbourhood of the identity.} of a
fully regular\footnote{``Fully regular'' here means that the system $\Delta$
is locally solvable and has a non-vanishing Jacobian determinant. In practice,
most systems are fully regular \cite{OlverLecturesLGaDE}.} system of $m$
PDEs, written as $\Delta=0$, if and only if
\begin{align}\label{E:prDelta=0PDEsymCond}
	 (\on{pr}X(\Delta_i))|_{\Delta = 0}=0 \quad \forall i\in \{1,\ldots,m\},  
\end{align}
for all infinitesimal generators $X\in \La{g}$ \cite{OlverLecturesLGaDE}. 
The condition \eqref{E:prDelta=0PDEsymCond} is equivalent to requiring that,
for a system $\Delta$ of order $n$, the prolongation $\operatorname{pr}^{(n)}X$
(or $\operatorname{pr}X$) is a vector field tangent to the solution manifold
$\mathcal{M}_\Delta$ defined in \eqref{E:solManifold}, which can also be
taken as a geometric definition of a (Lie point) symmetry.

The elements of $\Lg{G}$ are (Lie point) symmetries of the system $\Delta$.
When no confusion can arise, we will also refer to a generator
$X \in \La{g}$ as a symmetry whenever it satisfies
\eqref{E:prDelta=0PDEsymCond}.
 Moreover, all Lie point symmetries
of a system of PDEs will be found by Lie symmetry analysis,
because \eqref{E:prDelta=0PDEsymCond} holds for all symmetry generators in any connected symmetry group. We will only consider point symmetries, implemented by only applying point transformation generators \eqref{E:infGenPoint} in \eqref{E:prDelta=0PDEsymCond}, but \eqref{E:prDelta=0PDEsymCond} may be generalized to higher-order symmetries \cite{OlverLecturesLGaDE}.
Of course, 
we only 
have to consider prolongations $\on{pr}^{(n)}(X)$ up to the order $n$ of $\Delta$, when applying \eqref{E:prDelta=0PDEsymCond}. 
 If we perform a Lie symmetry analysis of the system $\Delta=0$
we may for example find that the Lie symmetry algebra %concretely 
is $\La{g}=\La{so}(6)$. Then, 
a (global) symmetry group of the system
 $\Delta=0$ may be $\Lg{SO}(6)$. However, if 
$-I$ acts identically to $I$ on $\Delta$, the corresponding faithful symmetry group is the projective group $\Lg{PSO}(6)\cong \Lg{SO}(6)/\{\pm I\}$. Both groups have the same Lie algebra 
$\La{g}$. However, the maximal, faithfully (or effectively) acting symmetry group may be a larger group than the latter, such as $\Lg{PO}(6)=\Lg{O}(6)/\{\pm I\}$, also with Lie algebra $\La{g}$, but with two components, due to the presence of an additional discrete reflection symmetry. Generally, if the symmetry algebra is $\La{g}$, the maximal symmetry group of $\Delta$ may, a priori, be any group $\Lg{G}$ with Lie algebra $\La{g}$. The identity component of $\Lg{G}$ is then a quotient 
$\tilde{\Lg{G}}/N$ of the unique simply connected group $\tilde{\Lg{G}}$ with Lie algebra $\La{g}$, where $N$ is a normal subgroup
\cite{kunzinger2024lietransformationgroups}.
The identification of the maximal symmetry group $\Lg{G}$, including any
missing discrete symmetries, can be achieved by studying the automorphism
groups of the Lie symmetry algebras obtained through Lie symmetry analysis.
For ordinary differential equations, Hydon \cite{hydon1998discrete}
describes how discrete point symmetries can be constructed from the continuous
symmetry algebra, and extends this approach to discrete contact symmetries in
\cite{hydon1998find}, while in \cite{hydon2000construct} similar methods are
applied to partial differential equations. A systematic application of these
techniques to determine the full symmetry group $\Lg{G}$ for the models
considered here would, however, require a separate analysis and is beyond the
scope of the present work.

Condition \eqref{E:prDelta=0PDEsymCond} is known as \textit{the linearized symmetry condition}, and the resulting concrete equations are called the \textit{determining 
equations} of $\La{g}$ (or, less precisely, $\Lg{G}$). Combining equations \eqref{E:kProlong} and \eqref{E:prDelta=0PDEsymCond} shows that the set of determining equations
yields an extensive, over-determined system of linear PDEs in the coefficients $\xi^{\mu}(x,y)$ and $\eta^{i}(x,y)$. This system may almost always be explicitly solved, and hence we can determine the symmetry algebra $\La{g}$, i.e.~the infinitesimal symmetries. In this work we, most of the time, will apply the \texttt{Mathematica} package \texttt{SYM} \cite{dimas2005sym} to calculate determining equations.
   
\subsection{Symmetries of the action $\mc{S}$}
\label{sec:SymmetriesOfTheActionMcS}
Of particular interest are symmetries of the action integral
\begin{align}
	\mc{S}=\int_{\Omega} \mc{L}({x},{y},\ldots,{y}^{(n)})\,dx^0\cdots dx^{d-1},
\end{align}
since they, if they are of Lie type, generate Noether currents and are usually preserved in the quantized theory. 
As mentioned in Section \ref{sect:intro}, a transformation is called a {variational symmetry} of a theory, if the transformation leaves the action integral of the theory invariant,
\begin{align}
	\mc{S}=\hat{\mc{S}}\equiv \int_{\hat{\Omega}} \mc{L}(\hat{x},\hat{y},\ldots,\hat{y}^{(n)})\,d\hat{x}^0\cdots d\hat{x}^{d-1}.
\end{align}
%In particular, if the Lagrangian is 
A Lie-type point transformation $\exp(\epsilon X)$ for $\epsilon \in \N{R}$ will be a variational symmetry if and only if
\begin{align}\label{E:condVarSym}
	\on{pr}X(\mc{L})+\mc{L}\,{d_\mu \xi^\mu}={d_\mu \beta^\mu},
\end{align}
where $\beta^\mu$ is a local\footnote{The $\beta$'s will be functions on the jet space $J^n$, hence non-local expressions such as $\int y dx$ cannot occur in the $\beta$'s \cite{olver1998applications}.} function of $x$, $y$ and derivatives of $y$ up to some order, and where $d_\mu\equiv {d}/{dx^\mu}$ is a total derivative. If we can choose 
	$\beta^\mu=0$ 
for all $\mu$,
and hence 
\begin{align}\label{E:condStrictVarSym}
	\on{pr}X(\mc{L})+\mc{L}\,{d_\mu \xi^\mu}=0,
\end{align}
the Lie point symmetry will be a {strict variational symmetry}, and if we cannot choose $\beta^\mu=0$ 
for all $\mu$,
it will be a {divergence symmetry}. 
For an infinitesimal Lie point divergence symmetry the action, due to the divergence theorem, transforms as
\begin{align}
	\hat{\mc{S}}=\mc{S}+\epsilon \int_{\partial \Omega}\beta^\mu dF_\mu + \mc{O}(\epsilon^2),  
\end{align}
for a small parameter $\epsilon$, where $dF_\mu$ are the components of the differential outward normal vector of the boundary $\partial \Omega$ of $\Omega$. Then, $\mc{S}$ is invariant if and only if the boundary term vanishes,
\begin{align}\label{E:boundTerm}
	\int_{\partial \Omega}\beta^\mu dF_\mu=0,
\end{align}
which will be the case if the fields vanish sufficiently fast at infinity or obey certain periodic properties. Note that this boundary term does not depend on variations of the fields; therefore, it is not automatically zero, unlike variations at the boundary. Furthermore, it could be tempting to apply the criterion \eqref{E:condVarSym} directly to find all divergence symmetries, but it typically will lead us to very difficult, if not unsolvable, differential equations, and we are advised to proceed via the Euler-Lagrange equations \cite{sym10120744}.

A Lagrangian $\mc{L}'=\mc{L}+\epsilon d_\mu\beta^\mu$
will yield the same Euler-Lagrange equations as $\mc{L}$, since a total divergence (also denoted a null Lagrangian) will always produce trivial Euler-Lagrange equations. %, of the form zero equals zero.
In this study, the system of PDEs $\Delta=0$ is a set of Euler-Lagrange equations of a Lagrangian $\mc{L}$. We define the Euler operator $E=(E_1,\ldots,E_q)$ with components given by a formal infinite series
\begin{align}\label{E:EulerOpComps}
	E_i=\sum_{|J|\geq 0} (-1)^{|J|} D_J \frac{\partial}{\partial y^i_J}
	=\frac{\partial}{\partial y^i}-d_\mu \frac{\partial}{\partial y^i_{,\mu}}+\ldots,
	\quad \; i \in \{1,\ldots,q\}
\end{align}
where $|J|$, $y^i_J$ and $D_J$ are given by \eqref{E:|J|}, \eqref{E:y^i_J} and \eqref{E:itTotder}, respectively.  
Then, the Euler-Lagrange equations of a Lagrangian $\mc{L}$ can be written as
\begin{align}\label{E:ELeqs}
	E(\mc{L})=0.
\end{align}
A direct calculation shows that each $E_i$ annihilates any total divergence $d_\mu\beta^\mu$, as previously implied.
 In fact, it can be shown that a function $f$ is a total divergence if and only if it is annihilated by the Euler operator \cite{olver1998applications},
\begin{align}\label{E:EtotDiv}
	E(f)=0.
\end{align}
Therefore, if 
\begin{align}\label{E:divSymmetry}
	\on{pr}X(\mc{L})+\mc{L}\,d_\mu \xi^\mu=f,
\end{align}
for a non-vanishing $f$, the infinitesimal generator $X$ generates a divergence symmetry if and only if \eqref{E:EtotDiv}
	holds and hence $f=d_\mu \beta^\mu$ for some local functions $\beta^\mu$.
	
	The strict variational Lie point symmetries, given by the symmetry algebra $\La{g}_\text{svar}$, are obviously variational. The general variational Lie point symmetries, including divergence symmetries, also generate a Lie symmetry algebra, which we denote $\La{g}_\text{var}$. Finally, all Lie point symmetries of the Euler-Lagrange equations of $\mc{L}$ will generate a Lie symmetry algebra $\La{g}_\text{EL}$, which will contain the two other algebras, because it can be shown that all variational symmetries are symmetries of the Euler-Lagrange equations of the theory (the converse does not hold in general). 
Hence, the following symmetry algebra inclusions always hold:
\begin{align}\label{E:symAlgIncl}
	\La{g}_\text{svar} \subseteq \La{g}_\text{var}\subseteq \La{g}_\text{EL}\,.
\end{align}
The symmetries of $\La{g}_\text{EL}$ that are not included in the subalgebra $\La{g}_\text{var}$ are {non-variational Lie point symmetries}. 
In section \ref{sec:2HDM} we will demonstrate that for the 2HDM,  
 the inclusions in \eqref{E:symAlgIncl} actually are equalities. 

According to Noether's theorem, variational Lie point symmetries generate conserved currents. 
 In the case of a first-order Lagrangian $\mc{L}({x},{y},{y}^{(1)})$ a symmetry given by \eqref{E:infGenPoint} induces a conservation law $d_\mu j^\mu=0$ with a conserved current
\begin{align}
	j^\mu =(\eta^i-\xi^\nu y_{,\nu}^i)\frac{\partial\mc{L}}{\partial y^i_{,\mu}}+\xi^\mu \mc{L}-\beta^\mu.
\end{align}
The quantized version of the theory defined by the classical action $\mc{S}$, is determined by Feynman's path integral $Z$, 
\begin{align}
Z \;=\; \int \mc{D}y \; e^{\,i \mc{S}[y]}.	
\end{align}
Variational symmetries leave the action invariant, and if the measure $\mc{D}y\equiv \prod_{i,x}d y^i(x)$ of $Z$
is also invariant under the transformation, the symmetry will also be a symmetry of the quantized theory.

\subsection{Theories with potentials}
\label{sec:MultiHiggsModels}
One of the goals of this work is to demonstrate how Lie symmetry analysis of PDEs can be applied to the Euler-Lagrange equations of particle physics models with potentials, such as multi-Higgs models.  
In this section we present 
three results which may simplify the symmetry analysis of such models.

We start by defining the \textit{Fréchet derivative} \cite{olver1998applications} of an $r$-tuple $P[y]\equiv P(x,y,\ldots,y^{(n)})$ of differential functions as the operator 
$D_P$ that maps any $q$-tuple of differential (smooth) functions $U$ to
\begin{align}
	D_P[U] \;=\; \left.\frac{d}{d\varepsilon}\,P\big[y+\varepsilon\,U\big]\right|_{\varepsilon=0}\,.
\end{align}
Then, the Fréchet derivative of an $r$-tuple of functions $P=(P_1,\ldots,P_r)$ is the $r\times q$ matrix with elements \cite{olver1998applications}
\begin{align}
	(D_P)_{ij}=\sum_{|J|\geq 0} \frac{\partial P_i}{\partial y_J^j}D_J.
\end{align}
The \textit{adjoint} $D_P^\ast$ of the Fréchet derivative is an operator with similar properties \cite{olver1998applications},
\begin{align}\label{E:FrechetAdj}
	(D_P)_{ij}^\ast= \sum_{|J|\geq 0}  ((-1)^{|J|} D_J)\cdot \frac{\partial P_j}{\partial y_J^i}\,,
\end{align}
where e.g.~$D_x \cdot u(x)=u_x+uD_x$ on the right-hand side of equation \eqref{E:FrechetAdj}. Let $X$ be an infinitesimal generator with coefficients
\begin{align}
\xi^\mu &=0, \quad \forall \mu \in\{0,\ldots,d-1 \} \label{E:xiPPT}\\
\eta^i &=\eta^i(y^1,\ldots,y^{q}),	\quad \forall i \in\{1,\ldots,q \} \label{E:etaPPT}
\end{align}
where the $\eta^i$ are polynomials in the fields (i.e.,~the dependent variables).
Then, $X$ has characteristic 
\begin{align}
	Q^i=\eta^i(y^1,\ldots,y^{q}), 
\end{align}
cf.~\eqref{E:characteristicDef}, and $X$ is in evolutionary form because of \eqref{E:xiPPT}.
The adjoint of the Fréchet derivative of $Q$ then becomes  
\begin{align}\label{E:adjFrechetScAff}
	(D_Q)_{ij}^\ast=  \frac{\partial \eta^j(y^1,\ldots,y^{q})}{\partial y^i}=\mc{J}_{ji}=\mc{J}^T_{ij},
\end{align}
where $\mc{J}$ is the Jacobian matrix of $\eta$,
since only the $|J|=0$ term contributes to \eqref{E:FrechetAdj}.
Derivatives such as $d_\mu$ and $\partial_{y^i_{,\mu}}$ do not commute.
However, the following, useful commutation formula holds \cite{olver1998applications}
\begin{align}\label{E:commFrechet}
	E(\on{pr}X_Q(\mc{L}))=\on{pr}X_Q(E(\mc{L}))+D_Q^\ast E(\mc{L}),
\end{align}
for a $q$-tuple of general, smooth functions $Q^i$, cf.~\eqref{E:evolRepr}.
Moreover, let 
\begin{align}
\mc{L}=T-V 	
\end{align}
be the Lagrangian density of a theory, where $T$ denotes the kinetic part (the sum of the kinetic terms), while $V(\varphi_1,\ldots,\varphi_m)$ is a potential, i.e.~a real polynomial in a subset of the dependent variables. Specifically,
let
\begin{align}\label{E:varphiSubsety}
	\varphi=\{\varphi_1,\ldots,\varphi_m\}\subset \{y^1,\ldots,y^q\}=y.
\end{align}
For convenience, we assume that the dependent variables are ordered such that
\begin{align}
	\varphi_i\equiv y^i, \quad \forall i\in \{1,2,\ldots, m\}.
\end{align}
An $n$th-order Lagrangian $\mathcal{L}$ may be viewed as a function on the jet
space $J^n$, $\mathcal{L}:J^n\to\N{R}$. Note that $\varphi$ may consist of any of the fields of $y$, although the intention is to let $V(\varphi)=V(\phi)$, that is, let $V$ be a scalar potential. Also note that the "kinetic" part $T$ here actually may consist of any terms complementary to $V$. 
 Moreover, we assume $E(L)=0$ does not imply any polynomial consequences of the form  
\begin{align}\label{E:polCons}
	p(\varphi)=0,
\end{align}
where $p$ is a non-trivial, multivariable polynomial in the same fields as the potential $V$.\footnote{In differential algebra language, $I(E(\mc{L}))\cap \N{R}[\varphi_1,\ldots,\varphi_m]=\{0\}$, where $I(E(\mc{L}))$ is the differential ideal generated by the Euler-Lagrange expressions $E(\mc{L})$ (it encodes all formal consequences of the Euler-Lagrange-equations), see for example \cite{Robertz2018}, and $\N{R}[\varphi_1,\ldots,\varphi_m]$ is the ring of all polynomials in the variables $\varphi$ with real coefficients.} This means that we cannot derive any relation of the form \eqref{E:polCons}, from the Euler-Lagrange equations.
We will call such a theory a \textit{polynomial potential theory}. Practically any multi-Higgs model is an example of a polynomial potential theory, including the 2HDM. One reason is that each Euler-Lagrange equation $E_i(\mc{L})=0$ in multi-Higgs models will include distinct, second-order derivatives of the field $y^i$ that do not appear in any other $E_j(\mc{L})=0$ for $j\ne i$, and hence cannot be eliminated to produce some polynomial consequence \eqref{E:polCons}.
Here, we assume the multivariable polynomial $V$ may include any terms of any degree in the fields $\varphi$, but in some cases we will regard potentials without linear terms (but a constant in $V$ is allowed). 
However, in these cases, the linear terms are not forbidden among the kinetic terms.
 We now prove the following, useful result regarding evolutionary symmetries of polynomial potential theories: 
\begin{theorem}\label{P:prXannTannV}
Let $\mc{L}=T-V$ be a polynomial potential theory and let the infinitesimal generator 
\[
X=\eta^i(y^1,\ldots,y^{q})\partial_{y^i}, 
\]
where each $\eta^i$ is a polynomial, be a symmetry of $E(\mc{L})=0$. Moreover, assume that either $V(\varphi_1,\ldots,\varphi_m)$ does not contain any linear terms $\alpha_i \varphi_i$, or that
 $\alpha_i\ne 0 \Rightarrow \eta^i(y^1,\ldots,y^{q})$ does not
contain a constant term.
%consider the action of the prolongation of $X$
%on the kinetic terms $T$ of $\mc{L}$,
%$\on{pr}X(T)$. 
Then,   
\begin{enumerate}[label=(\roman*)]
 \item If $\on{pr}X(T)=0$, the symmetry generated by $X$ is strictly variational.
  \item If $\on{pr}X(T)=d_\mu \beta^\mu$ for some non-vanishing current $\beta^\mu$, i.e.~a total divergence, 
        the symmetry generated by $X$ is a divergence symmetry.
\end{enumerate}
\end{theorem}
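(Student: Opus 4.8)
The plan is to reduce everything to the variational-symmetry criterion in evolutionary form and then exploit the commutation formula together with the polynomial-potential hypothesis. Since $X$ is already evolutionary ($\xi^\mu=0$), the criterion \eqref{E:condVarSym} collapses: $X$ is variational precisely when $\on{pr}X(\mathcal{L})=d_\mu\beta^\mu$ is a total divergence, it is \emph{strictly} variational when one may take $\beta^\mu=0$, i.e.\ $\on{pr}X(\mathcal{L})=0$, and a divergence symmetry otherwise. By the characterization \eqref{E:EtotDiv}, the assertion ``$\on{pr}X(\mathcal{L})$ is a total divergence'' is equivalent to $E(\on{pr}X(\mathcal{L}))=0$. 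So the whole argument reduces to controlling $E(\on{pr}X(\mathcal{L}))$.

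First I would apply the commutation formula \eqref{E:commFrechet} to write $E(\on{pr}X_Q(\mathcal{L}))=\on{pr}X_Q(E(\mathcal{L}))+D_Q^\ast E(\mathcal{L})$. Because $X$ is assumed to be a symmetry of $E(\mathcal{L})=0$, the first term $\on{pr}X_Q(E(\mathcal{L}))$ lies in the differential ideal $I(E(\mathcal{L}))$ (this is the content of the symmetry condition for a fully regular system); for the second term, the adjoint Fréchet derivative computed in \eqref{E:adjFrechetScAff} is mere multiplication by the Jacobian $\mathcal{J}_{ji}=\partial\eta^j/\partial y^i$ (only the $|J|=0$ term survives), so $D_Q^\ast E(\mathcal{L})$ is a linear combination of the $E_j(\mathcal{L})$ and also lies in $I(E(\mathcal{L}))$. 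Hence $E(\on{pr}X(\mathcal{L}))\in I(E(\mathcal{L}))$.

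Next I would split $\on{pr}X(\mathcal{L})=\on{pr}X(T)-\on{pr}X(V)$ and apply $E$ to each piece. Since $V$ depends only on the undifferentiated fields and $X$ carries no field derivatives, $\on{pr}X(V)=\sum_i\eta^i\,\partial_{y^i}V$ is a polynomial in the undifferentiated fields, so that $E_i(\on{pr}X(V))=\partial_{y^i}(\on{pr}X(V))$ is again such a polynomial. In case (i) we have $\on{pr}X(T)=0$, while in case (ii) we have $E(\on{pr}X(T))=E(d_\mu\beta^\mu)=0$ because the Euler operator annihilates total divergences; either way $E(\on{pr}X(\mathcal{L}))=-E(\on{pr}X(V))$. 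Combining with the previous step, $E(\on{pr}X(V))$ is a tuple of derivative-free polynomials lying in $I(E(\mathcal{L}))$, so by the polynomial-potential hypothesis it must vanish identically. Therefore $\partial_{y^i}(\on{pr}X(V))=0$ for every $i$, i.e.\ $\on{pr}X(V)$ equals a constant $c$.

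Finally I would pin down $c$ using the linear-term hypothesis, which is the crux. The constant term of $\sum_i\eta^i\,\partial_{y^i}V$ equals $\sum_i\eta^i_0\,\alpha_i$, where $\eta^i_0$ is the constant term of $\eta^i$ and $\alpha_i=\partial_{y^i}V|_{\varphi=0}$ is the coefficient of the linear term $\alpha_i\varphi_i$ in $V$. If $V$ has no linear terms then all $\alpha_i=0$; otherwise the hypothesis forces $\eta^i_0=0$ whenever $\alpha_i\neq0$; in both cases every product $\eta^i_0\alpha_i$ vanishes, so $c=0$ and hence $\on{pr}X(V)=0$. This collapses $\on{pr}X(\mathcal{L})=\on{pr}X(T)$, giving $\on{pr}X(\mathcal{L})=0$ (strict variational) in case (i) and $\on{pr}X(\mathcal{L})=d_\mu\beta^\mu$ with non-vanishing $\beta^\mu$ (divergence symmetry) in case (ii). I expect the main obstacle to be the middle step: certifying that $E(\on{pr}X(\mathcal{L}))$ sits in $I(E(\mathcal{L}))$ while simultaneously being a derivative-free polynomial, since this is where the symmetry hypothesis, the special multiplicative form of $D_Q^\ast$, and the polynomial-potential assumption must all be combined. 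A secondary point requiring care is that the constant $c$ could a priori register the linear terms of $V$, which is exactly the scenario the second hypothesis is designed to exclude.
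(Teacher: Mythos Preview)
Your proposal is correct and follows essentially the same route as the paper: apply the commutation formula \eqref{E:commFrechet}, use the symmetry hypothesis together with the multiplicative form \eqref{E:adjFrechetScAff} of $D_Q^\ast$ to see that $E(\on{pr}X(V))$ is a derivative-free polynomial that vanishes on solutions, invoke the polynomial-potential hypothesis to conclude it vanishes identically, and then use the linear-term assumption to kill the constant. The only cosmetic difference is that the paper evaluates \eqref{E:commFrechet2} directly on-shell and then lifts the polynomial identity off-shell, whereas you phrase this via membership in the differential ideal $I(E(\mathcal{L}))$; both formulations amount to the same thing here.
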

\begin{proof}
 The linearized symmetry condition \eqref{E:prDelta=0PDEsymCond}
applied on the Euler-Lagrange equations $E(\mc{L})=0$ yields
\begin{align}\label{E:prDelta=0PDEsymCond2}
	 \big(\on{pr}X(E_i(\mc{L}))\big)|_{E(\mc{L}) = 0}=0, \quad \forall i\in \{1,\ldots,q\},  
\end{align}
while \eqref{E:commFrechet} implies
\begin{align}\label{E:commFrechet2}
	\on{pr}X(E(\mc{L}))= E(\on{pr}X(\mc{L}))-D_Q^\ast E(\mc{L}),
\end{align}
since $\on{pr}X= \on{pr}X_Q$, because the former is already in evolutionary form. Assume 
\begin{align}
\on{pr}X(T)=d_\mu \beta^\mu, 	
\end{align}
for a possible vanishing current $\beta$. Substituting \eqref{E:adjFrechetScAff} into \eqref{E:commFrechet2} then yields, 
\begin{align}\label{E:commFrechet3}
	\on{pr}X(E(\mc{L}))= -E(\on{pr}X(V))-\mc{J}^T E(\mc{L}),
\end{align}
because $E(d_\mu \beta^\mu)=0$ cf.~\eqref{E:EtotDiv}, and
where $\mc{J}$ is the Jacobian matrix of $\eta$.
Then, if $E(\mc{L}) = 0$, the first and last terms in \eqref{E:commFrechet3} vanish, cf.~\eqref{E:prDelta=0PDEsymCond2}, that is, for any $i$
\begin{align}\label{E:derPrPot0}
	E_i(\on{pr}X(V))=\frac{\partial}{\partial y^i}\on{pr}X(V)=0,
\end{align}
because there are no derivatives in the potential. The polynomial equation \eqref{E:derPrPot0} will also hold when $E(\mc{L})\ne 0$, because $E(\mc{L})= 0$ does not imply any polynomial relations (i.e.,~consequences) between the fields of $V$, as $\mc{L}$ is a polynomial potential theory. But then $\on{pr}X(V)=C$ for constant $C$, and $C=0$ because all terms of 
\begin{align}
\on{pr}X(V)=X(V)=\eta^i\partial_{y^i}V	
\end{align}
are, if non-vanishing, at least linear in the fields, because if $\partial_{y^i}V$ includes a constant term for an $i$, then $\eta^i$ does not, per assumption. This means that the potential $V$ is annihilated by 
the prolongation of $X$,
\begin{align}
	\on{pr}X(V)=0, 
\end{align}
  and hence 
	\begin{align}
		\on{pr}X(\mc{L})=\on{pr}X(T)=d_\mu \beta^\mu.
	\end{align}
Thus, if 
 $\beta=0$ then $X$ generates a strict variational symmetry, cf.~\eqref{E:condStrictVarSym} with $\xi=0$.
Furthermore, if $\beta$ is non-vanishing, $X$ generates a divergence symmetry, cf.~\eqref{E:divSymmetry} with $\xi=0$.
\end{proof}
  We will apply Theorem \ref{P:prXannTannV} in Section \ref{sec:2HDM} to demonstrate that symmetries must be strictly variational, without having to consider all the numerous, different conditions on the parameters of the potential.

\subsection{Scalar, variational symmetries in any spacetime dimension}
\label{sec:TheScalarSymmetriesAreTheSameInAnySpacetimeDimension}
It turns out that the purely scalar, variational Lie point symmetries of a multi-Higgs model with any number of doublets and singlets are the same for any spacetime dimension $d$. This may imply computational advantages, because if we are only interested in scalar, variational symmetries we may reduce the number of spacetime variables and gauge fields, and hence 
reduce the computational cost of finding the determining equations of the scalar, variational symmetries.  

Let the most general NHDM+KS Lagrangian in $d$ spacetime dimensions be given by
\begin{align}\label{E:L_NHDM+KS}
	\mc{L}_d(x_0,\ldots,x_{d-1})&=\sum_{\mu=0}^{d-1} \left( \sum_{n=1}^N(D_\mu \Phi_n)^\dag D^\mu \Phi_n+ \sum_{m=1}^K\frac{1}{2}\partial_\mu s_m\partial^\mu s_m\right) -V(\Phi,s) +T_\text{GB}
	\end{align} 
	with $\Phi\equiv (\Phi_1,\ldots,\Phi_N)$ and $s\equiv (s_1,\ldots,s_K)$, and
	where the kinetic terms of the gauge bosons read
	\begin{align}\label{E:TGB}
T_\text{GB}	= -\sum_{\mu, \nu=0}^{d-1} \left( \tfrac{1}{4}W^a_{\mu\nu}W^{a\mu\nu}
  +\tfrac{1}{4}B_{\mu\nu}B^{\mu\nu}\right).		
	\end{align}
The covariant derivatives, gauge boson field strength tensors and Higgs doublets are given by:
\begin{align}
D_\mu &= \partial_\mu + i g \frac{\sigma^a}{2} W^a_\mu + i g' Y B_\mu, \label{E:covarDerDef} \\
W^a_{\mu\nu} &= \partial_\mu W^a_\nu - \partial_\nu W^a_\mu + g \epsilon^{abc}W^b_\mu W^c_\nu, \label{E:Wfst}\\
B_{\mu\nu} &= \partial_\mu B_\nu - \partial_\nu B_\mu, \label{E:Bfst} \\
\Phi_j &= \frac{1}{\sqrt{2}}
\begin{pmatrix}
\phi_{4(j-1)+1} + i \phi_{4(j-1)+2} \\
\phi_{4(j-1)+3} + i \phi_{4(j-1)+4}
\end{pmatrix}, \label{E:HiggsDoubletsNHDM}
\end{align}
 where $\sigma^a$ are the Pauli matrices, $1\leq a \leq 3$ and $Y$ denotes the
$\Lg{U}(1)_Y$ hypercharge of the corresponding field. Geometrically,
$W^a_\mu$ can be viewed as the local components of an $\Lg{SU}(2)_L$
principal connection, while $B_\mu$ are the local components of a
$\Lg{U}(1)_Y$ principal connection.
Here, we also include the case where $K=0$,
that is, where the theory is a pure NHDM with no gauge singlets.

Then, the variational, scalar symmetries of theories given by $\mc{L}_d$ are independent of $d$: 
\begin{restatable}{proposition}{MainProp}\label{P:SimplifiedModel}
  The variational Lie point symmetries transforming only the scalars are the same for all NHDM+KS Lagrangians $\mc{L}_d$, regardless of the spacetime dimension $d\in \N{N}$.    
\end{restatable}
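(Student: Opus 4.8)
The plan is to reduce the variational condition to a set of determining equations on the scalar coefficients and then show that this system does not see the spacetime dimension $d$. A transformation ``transforming only scalars'' has $\xi^\mu=0$ and vanishing coefficients along the gauge fields, so its generator is automatically in evolutionary form, $X=\eta^i(\phi,s)\,\partial_{y^i}$ with $\eta^i$ nonzero only on the scalar components. Because $T_{\text{GB}}$ in \eqref{E:TGB} contains no scalars and the gauge fields are inert under $X$, we have $\on{pr}X(T_{\text{GB}})=0$. By \eqref{E:EtotDiv} together with \eqref{E:condVarSym} at $\xi=0$, $X$ is variational precisely when
\[
 E\big(\on{pr}X(T_d)-X(V)\big)=0,
\]
and it is strictly variational (see \eqref{E:condStrictVarSym}) exactly when this expression vanishes identically rather than merely being a total divergence. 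Since $V=V(\varphi)$ is manifestly $d$-independent, so is $X(V)=\eta^i\partial_{y^i}V$ (indeed, when the $\eta^i$ are polynomial, Theorem \ref{P:prXannTannV} even forces $X(V)=0$, but I only need $d$-independence here); the entire $d$-dependence is thus confined to $\on{pr}X(T_d)$.

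Next I would compute $\on{pr}X(T_d)$ explicitly. Writing $H_n$ for the doublet assembled from the scalar coefficients $\eta^i$ (so that $\delta\Phi_n=H_n$) and $\sigma_m$ for the transformation of the singlet $s_m$, the prolongation sends $\Phi_n\mapsto H_n$ and $\partial_\mu\Phi_n\mapsto d_\mu H_n$ while leaving the connection in \eqref{E:covarDerDef} untouched, so that $\delta(D_\mu\Phi_n)=D_\mu H_n$. Hence
\[
 \on{pr}X(T_d)=2\,\Re\!\big[(D_\mu H_n)^\dagger D^\mu\Phi_n\big]+(\partial_\mu\sigma_m)\,\partial^\mu s_m.
\]
The crucial structural observation is that this is a Lorentz scalar: every index $\mu$ is contracted through the metric, so the dependence on $d$ enters only as the range of a single contraction $g^{\mu\nu}(\cdots)_\mu(\cdots)_\nu$, never as a free index. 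The same is manifestly true of $X(V)$, which carries no contracted index at all.

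I would then extract the determining equations by applying $E$ (equivalently, by collecting coefficients of the independent jet coordinates) to $\on{pr}X(T_d)-X(V)$. Expanding $D_\mu H_n$ and $D^\mu\Phi_n$ via \eqref{E:covarDerDef}, the derivative content of $\on{pr}X(T_d)$ is carried entirely by contracted monomials such as $\partial_\mu\phi_a\,\partial^\mu\phi_b$, $W^a_\mu\,\partial^\mu\phi_b$, $W^a_\mu W^{b\mu}$, and the covariant d'Alembertian $D_\mu D^\mu\Phi_n$; the Euler operator converts the total-divergence requirement into the vanishing of the field-dependent coefficient functions multiplying these structures. Each resulting coefficient equation carries at most an overall nonzero factor $g^{\mu\mu}$, or a sum over a nonempty index range, which may be divided out; consequently the equation holds for one $d\ge 1$ if and only if it holds for every $d\ge 1$. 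The same reasoning applies to the gauge-field components $E_{W^a_\mu}$ and $E_{B_\mu}$ of the condition, which arise from $\on{pr}X$ acting on the scalar currents and are again Lorentz-contracted. Therefore the determining system, and with it the solution set $\{\eta^i\}$, is identical for all $d$, so the scalar variational symmetries coincide for every spacetime dimension.

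The main obstacle I anticipate is showing that no determining equation is gained or lost through a low-dimensional coincidence. Concretely, one must verify that every structure produced by $E$ is a symmetric metric contraction, so that it is already present and nondegenerate for $d=1$, and that no antisymmetric or mixed-index object---which could vanish identically in low dimension---slips in through the spectator gauge terms or through cross-contributions of the covariant derivatives. Because the scalar kinetic terms in \eqref{E:L_NHDM+KS} are symmetric contractions, I expect only diagonal, d'Alembertian-type structures to survive, which makes the stripping of the contraction clean; but this careful bookkeeping over the gauge-field equations is exactly where the real work lies, and it is presumably what the appendix proof carries out in detail.
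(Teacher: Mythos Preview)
Your proposal is correct and follows essentially the same route as the paper's appendix proof: both hinge on the observation that, for a purely scalar generator, $\on{pr}X(\mc{L}_d)$ is a sum over $\mu$ of structurally identical blocks built from the independent jet coordinates $\partial_\mu\phi_a$, $W^a_\mu$, $B_\mu$, $\phi_{k,\mu\mu}$, etc., so the vanishing (resp.\ total-divergence) condition decomposes $\mu$-by-$\mu$ and is therefore insensitive to how many $\mu$'s there are.

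The differences are organizational rather than substantive. The paper treats the strict variational case ($\on{pr}X(\mc{L}_d)=0$) and the divergence case ($E(\on{pr}X(\mc{L}_d))=0$) separately and, in each, writes out $\on{pr}^{(1)}X_d\mc{L}_d$ explicitly and then argues term-by-term that the potential piece $\eta^i\partial_{\phi_i}V$ must vanish on its own (it is the unique term free of $G_\mu$ and $\partial_\mu\phi$), while the remaining terms split into $d$ non-overlapping ``$\mu$-categories'' that cannot cancel against one another because they involve distinct jet variables. You instead package both cases into the single Euler-operator criterion $E(\on{pr}X(T_d)-X(V))=0$ and phrase the decomposition in terms of Lorentz contractions; this is slightly more streamlined. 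The obstacle you flag---ruling out low-dimensional coincidences from antisymmetric or trace-type contractions---is precisely what the paper disposes of by the explicit $\mu$-tagging argument and the off-shell independence of $\partial_{\check\mu}\partial^{\check\mu}\phi_k$ for different $\check\mu$ (their footnote on this point). Your one imprecise phrase is ``an overall nonzero factor $g^{\mu\mu}$, or a sum over a nonempty index range'': the latter could in principle produce a $d$-dependent trace, but in fact no such undifferentiated trace arises because every surviving term after applying $E$ still carries at least one $\mu$-tagged jet coordinate, so the coefficients separate individually rather than as a sum. Making that explicit is exactly the ``careful bookkeeping'' you anticipate, and it is what the appendix carries out.
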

\begin{proof}
The proof of this is provided in Appendix \ref{sec:ProofOfProposition}.
\end{proof}
An immediate consequence of Proposition \ref{P:SimplifiedModel}
is then
\begin{corollary}\label{C:1d4dcorr}
To find all scalar, variational Lie point symmetries of an NHDM+KS Lagrangian $\mc{L}_4$, it is sufficient to consider the simplified Lagrangian $\mc{L}_1$, because such symmetries are exactly the same for the two theories.
\end{corollary}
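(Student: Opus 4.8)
The plan is to read the corollary off Proposition \ref{P:SimplifiedModel} directly. That proposition states that the purely scalar, variational Lie point symmetries of the family of Lagrangians $\mc{L}_d$ are the same for every spacetime dimension $d\in\N{N}$. Specializing to the two values $d=4$ and $d=1$, I would conclude that the scalar, variational symmetry generators obtained from $\mc{L}_4$ are literally the same vector fields as those obtained from $\mc{L}_1$; equivalently, the algebra $\La{g}_\text{var}$ computed in the two theories coincides when restricted to generators transforming only scalars. Since the two sets are equal, computing the one for $\mc{L}_1$ already determines the one for $\mc{L}_4$, which is precisely the claim.

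To make clear that passing to $\mc{L}_1$ is a genuine simplification and not a vacuous restatement, I would record the shape of $\mc{L}_1$. With a single independent coordinate $x_0$, the antisymmetric field-strength tensors $W^a_{\mu\nu}$ and $B_{\mu\nu}$ of \eqref{E:Wfst}--\eqref{E:Bfst} carry only the index pair $\mu=\nu=0$ and therefore vanish identically, so the gauge-kinetic term \eqref{E:TGB} satisfies $T_\text{GB}=0$. Furthermore, having one independent variable instead of four reduces the number of derivative multi-indices $J$ appearing in the prolongation \eqref{E:kProlong}, and hence the size of the determining system produced by \eqref{E:prDelta=0PDEsymCond}. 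This is exactly the computational advantage announced ahead of the corollary.

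The only nontrivial ingredient is Proposition \ref{P:SimplifiedModel} itself, whose proof is deferred to Appendix \ref{sec:ProofOfProposition}; granted that result, the corollary requires no additional work beyond the substitution $d\in\{1,4\}$. The genuine obstacle therefore lives entirely in the proposition, where one must check that increasing $d$ only adds spacetime- and gauge-field structure that cannot couple to generators acting solely on scalars, and that the variational criterion \eqref{E:condVarSym} evaluated on such purely scalar generators is independent of $d$.
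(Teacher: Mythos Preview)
Your proposal is correct and matches the paper's approach exactly: the paper also states the corollary as an immediate consequence of Proposition~\ref{P:SimplifiedModel}, obtained by specializing to $d=1$ and $d=4$, and then notes (as you do) that $\mc{L}_1$ has fewer gauge fields and independent variables, making the determining system much smaller.
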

The simplified Lagrangian $\mc{L}_1$ will only contain four gauge fields and one free variable and is hence computationally much less demanding than $\mc{L}_4$ that contains 16 gauge fields and four free variables.
The validity of Proposition \ref{P:SimplifiedModel} is confirmed by performing the analysis of Section \ref{sec:2HDM} on the $d=2$ variant 
of the 2HDM Lagrangian $\mc{L}_2$, in addition to the $d=4$
2HDM Lagrangian $\mc{L}_4$.\footnote{The even simpler case $d=1$ causes some \texttt{SYM}-related technical issues, because the kinetic terms of the gauge bosons vanish in this case. However, we could have added arbitrary dummy derivatives to circumvent this, as the kinetic gauge terms are annihilated by scalar symmetries $\on{pr}X$, and therefore become irrelevant for determining scalar symmetries.} After specializing to scalar transformations, the two approaches yield exactly the same equations, and hence exactly the same symmetry algebras (all Lie symmetries of the 2HDM turn out to be strictly variational and will hence be shared by the two Lagrangians according to Proposition \ref{P:SimplifiedModel}). However, in the $d=2$ case, the number of free variables and gauge fields are halved, that is, the field equations
\begin{align}
E(\mc{L}_2)=0	
\end{align}
form a system of 16 equations with 16 dependent and two free variables, whereas the field equations
\begin{align}
	E(\mc{L}_4)=0
\end{align}
form a system of 24 equations with 24 dependent and four free variables. The computation time for \texttt{SYM} to calculate the determining equations was reduced to a small fraction in the $d=2$ case.\footnote{For the 2HDM the calculation of the determining equations of $\mc{L}_4$ took nearly 14 hours on a desktop computer, whereas the corresponding calculation for $\mc{L}_2$ on the same computer took only 45 minutes.} 

In Appendix \ref{sec:ProofOfProposition}, we argue that Proposition \ref{P:SimplifiedModel} and Corollary \ref{C:1d4dcorr} cannot be extended to scalar, non-variational symmetries.

%%%%%%%%%%%%%%%%%%%%%%%%%%%%%%%%%%%%%%%%%%%%%%%%
\section{An illustrative example: real, scalar $\phi^4$ theory}
\label{sec:ASimpleExamplePhi4Theory}
%%%%%%%%%%%%%%%%%%%%%%%%%%%%%%%%%%%%%%%%%%%%%%%%
The following section may be skipped by readers familiar
with Lie symmetry analysis of PDEs. The aim is here to illustrate the general theory of Sections \ref{sec:PointSymmetriesOfSystemsOfPDEs}-\ref{sec:TheScalarSymmetriesAreTheSameInAnySpacetimeDimension}, using a simple example, namely
 real, scalar $\phi^4$ theory.
\subsection{$\phi^4$ theory in one dimension}
\label{sec:Phi4TheoryIn1dSpacetime}
Real scalar $\phi^4$ theory in 1d has a Lagrangian density
\begin{align}\label{E:LRscphi4}
	\mathcal{L} = \frac{1}{2}  (\phi')^2  - \frac{1}{2} m^2 \phi^2 - \frac{\lambda}{4!} \phi^4,
\end{align}
where $\phi=\phi(x)$ is a scalar field that depends on one free variable $x$, with a nonlinear field equation
\begin{align}\label{E:phi4_1D}
   \phi'' + m^2 \phi +\frac{\lambda}{6} \phi^3=0.
\end{align}
In the standard real $\phi^4$ model \eqref{E:LRscphi4} one usually imposes a discrete
$\mathbb{Z}_2$ symmetry $\phi \to -\phi$ in order to forbid linear and
cubic terms. This yields a simple, renormalizable model with a symmetric
vacuum structure and a potential containing only the even powers
$\phi^2$ and $\phi^4$.
The relevant prolongation \eqref{E:kProlong} of the infinitesimal symmetry generator $X$ in \eqref{E:infGenPoint} is 
\begin{align}\label{E:phi42prolong}
	\on{pr}^{(2)}X&=\xi(x,\phi) {\partial_x}  + \eta(x,\phi){\partial_\phi }   
	+ \big(\phi' \eta _{\phi }+\eta
   _{x}-\phi' \left(\phi' \xi _{\phi }+\xi _{x}\right)\big){\partial_{\phi'}} \nn \\
	&+ \Big(\phi'' \eta _{\phi }+\phi' \eta _{x\phi }+\phi'
   \left(\phi' \eta _{\phi \phi }+\eta _{x\phi }\right)+\eta
   _{xx}-2 \phi'' \left(\phi' \xi _{\phi }+\xi
   _{x}\right) \nn \\
	&-\phi' \left(\phi'' \xi _{\phi
   }+\phi' \xi _{x\phi }+\phi' \left(\phi' \xi _{\phi \phi
   }+\xi _{x\phi }\right)+\xi _{xx}\right)\Big){\partial_{\phi''}}\, ,
\end{align}
where subscripts indicate partial derivatives.
We then apply \eqref{E:phi42prolong} to \eqref{E:phi4_1D},
and thereafter use the substitution 
\begin{align}\label{E:substPhi4}
	\phi''\to - m^2 \phi -\frac{\lambda}{6} \phi^3
\end{align}
everywhere we can, cf.~the linearized
symmetry condition \eqref{E:prDelta=0PDEsymCond}, with the following determining equation as result:
\begin{align}\label{E:detEqsPhi4}
0&=\left(\tfrac{\lambda  \phi ^2}{2}+m^2\right) \eta -\tfrac{1}{6}
   \left(\lambda  \phi ^3+6 m^2 \phi \right) \left(\eta _{\phi }-3
   \phi ' \xi _{\phi }-2 \xi _{x}\right)\nn \\&-\phi ' \left(\phi '
   \left(-\eta _{\phi \phi }+\phi ' \xi _{\phi \phi }+2 \xi
   _{x\phi }\right)-2 \eta _{x\phi }+\xi _{xx}\right)+\eta
   _{xx} \, .
	\end{align}
 Here, each coefficient of each distinct power of $\phi'$
must vanish for \eqref{E:detEqsPhi4} to hold for any solution of the original 
field equation \eqref{E:phi4_1D} (remember, $\xi$ and $\eta$ do not contain $\phi'$), and we then obtain a larger system of determining equations:
\begin{align}\label{E:detEqsPhi4end}
{\phi'}^0:\quad	0&=\left(\tfrac{\lambda  \phi ^2}{2}+m^2\right) \eta -\tfrac{1}{6}
   \left(\lambda  \phi ^3+6 m^2 \phi \right) \left(\eta _{\phi }-2 \xi _{x}\right)\nn +\eta_{xx}\, , \nn \\
	{\phi'}^1:\quad	0&=\tfrac{1}{2} \xi_{\phi } \left(\lambda  \phi ^3+6 m^2 \phi \right)+2 \eta
   _{x\phi }-\xi _{xx}\, , \nn \\
	{\phi'}^2:\quad	0&= \eta _{\phi \phi }-2 \xi _{x\phi }\, , \nn \\
	{\phi'}^3:\quad	0&= \xi _{\phi \phi }\, .
\end{align}
The determining equations \eqref{E:detEqsPhi4end} are a second-order system
of homogeneous, linear PDEs in the coefficients of the infinitesimal generator $X$, that is, $\xi$
and $\eta$. The system can be solved by elementary methods, and the solutions will depend on whether the parameters $m^2$ and $\lambda$ vanish, as the vanishing parameters will yield simpler equations with more solutions.
\subsubsection{Massive $\phi^4$ theory}
\label{sec:MassivePhi4Theory}
Assuming 
\begin{align}
	m^2,\lambda\ne 0
\end{align}
the solutions of the set of four determining equations \eqref{E:detEqsPhi4end}
are 
\begin{align}
	\xi(x,\phi)=c_1, \quad
	\eta(x,\phi)=0,
\end{align}
for a constant $c_1$, which gives us the 1d Lie algebra defined by the generator
\begin{align}\label{E:LieAlg_phi^4}
	X_1= \partial_x,
\end{align}
 which generates the translational symmetry $x \to x + \epsilon$ and
abstractly spans the one-dimensional Lie algebra $\mathbb{R}$.
 The generator \eqref{E:LieAlg_phi^4} is here a strict variational symmetry, because 
	\begin{align}
	\on{pr}\partial_x (\mc{L})+\mc{L}\,\frac{d\cdot 1}{dx} = \partial_x (\mc{L})=0,
	\end{align}
 cf.~\eqref{E:condStrictVarSym}, since $\mc{L}$ does not contain $x$ explicitly.
	\subsubsection{Massless $\phi^4$ theory}
\label{sec:MasslessPhi4Theory}
	If 
	\begin{align}
		m=0 \quad \text{and} \quad \lambda\ne 0
	\end{align}
	in \eqref{E:phi4_1D} and \eqref{E:detEqsPhi4end}, 
	the solutions to \eqref{E:detEqsPhi4end} are
	\begin{align}
		\xi(x,\phi)=c_1+c_2 x,\quad \eta(x,\phi)= -c_2\phi, 
	\end{align}
	for arbitrary, real constants $c_1$ and $c_2$. 
	Hence, the Lie symmetry \eqref{E:LieAlg_phi^4} is enhanced to a 2d Lie algebra
\begin{align}
	 X_1 &=\partial_x , \nonumber  \\
 X_2 &= x\partial_x - \phi\partial_\phi, 
\end{align}
where the commutator equals
\begin{align}
	[X_1,X_2]=X_1,
\end{align}
and we therefore have the 2d, non-abelian Lie algebra $\La{a}(1)$. The generator $X_1$ is strictly variational in the same manner as before. Concerning $X_2$,
\begin{align}\label{E:prX21d}
	\on{pr}^{(1)}X_2(\mc{L})+ \mc{L}d_x \xi  =(x\partial_x-\phi \partial_\phi
	-2\phi'\partial_{\phi'})\mc{L}+\mc{L}\,\frac{d x}{dx}= -4\mc{L}+\mc{L}= -3\mc{L},
\end{align}
where $E(-3\mc{L})=-3E(\mc{L})$ is non-zero and is proportional to the left-hand side of the field equation. Hence, the symmetry is non-variational, cf.~\eqref{E:EtotDiv}, which means that it is a symmetry of the field equation, but not the action. That $X_2$ is actually a symmetry of the field equation is 
confirmed by
checking the linearized symmetry condition \eqref{E:prDelta=0PDEsymCond} for the field equation:
\begin{align}
	\on{pr}^{(2)}X_2 (E(\mc{L}))=(x\partial_x-\phi \partial_\phi
	-2\phi'\partial_{\phi'}-3\phi''\partial_{\phi''})E(\mc{L})=-3 E(\mc{L}),
\end{align}
 which vanishes when $E(\mc{\mc{L}})=0$, that is, when the field equation \eqref{E:phi4_1D} (with $m=0$) holds, and $X_2$ is hence a symmetry of the equation.
 In 4d, the corresponding symmetry will be strictly variational, cf.~\eqref{E:strictVarSymX11}.
	\subsubsection{Free, massive scalar theory}
\label{sec:FreeMassiveScalarTheory}
Now, if we let 
\begin{align}
	\lambda=0\quad \text{and} \quad m\ne 0
\end{align}
in \eqref{E:phi4_1D} and \eqref{E:detEqsPhi4end}, 
	the solutions to \eqref{E:detEqsPhi4end} are
\begin{align}
	\xi(x,\phi) &=d_1(x)+ \phi d_2(x), \nonumber \\
	\eta(x,\phi) &=d_3(x)+ \phi d_4(x)+\phi^2 d_2'(x), 
\end{align}
where 
\begin{align}
	d_1(x)&=c_1+\frac{c_2 \sin
   (2 m x)}{2 m}-\frac{c_4 \cos (2 m x)}{2 m^2}+\frac{c_4}{2 m^2}, \nonumber \\
	d_2(x)&= c_5 \cos (m x)+ c_6 \sin (m x),				\nonumber  \\
	d_3(x)&= c_7 \cos (m x)+ c_8 \sin (m x),  \nonumber \\
	d_4(x)&= -c_2 \sin ^2(m x)+c_3 +\frac{c_4 \sin (2 m x)}{2 m}.
\end{align}
Then the infinitesimal generator \eqref{E:infGenPoint} is given by
\begin{align}
	X= \sum_{i=1}^8 c_i X_i,
\end{align}
where the generators $X_i$ may be taken as
\begin{align}\label{E:generators2MST}
	X_1 &= (1/m) \partial_x, \nonumber \\
X_2 &= {(1/m) \sin (m x) \cos (m x)  {\partial_x}}-\sin ^2(m x) \phi   { \partial_\phi }, \nonumber \\
X_3 &= \phi   { \partial_\phi }, \nonumber \\
X_4 &= (1/m) \sin ^2(m x)  { \partial_x}+\sin (m x) \cos (m
   x) \phi   { \partial_\phi }, \nonumber \\
X_5 &=(1/m)\cos (m x) \phi   { \partial_x}- \sin (m x)
   \phi ^2  { \partial_\phi }, \nonumber \\
	X_6 &=  (1/m)\sin (m x) \phi   { \partial_x}+ \cos (m x)
   \phi ^2  { \partial_\phi }, \nonumber \\
	X_7 &= \cos (m x)  { \partial_\phi }, \nonumber \\
	X_8 &=\sin (m x) \partial_\phi,
\end{align}
where the factors $(1/m)$ ensure that there are no occurrences of the mass parameter $m$ in the commutator table.
Here, $X_1$ corresponds to translations and $X_3$ corresponds to scaling symmetries.
Some occurrences of the parameter $m^{-1}$ have been absorbed in the constants $c_j$, which should be kept in mind if we want to explore the symmetries corresponding to the case $m=0$. The commutator table of
 \eqref{E:generators2MST} can then be calculated by applying \texttt{SYM} \cite{dimas2005sym}
\begin{align}\label{T:comm_mne0}
\small{	\begin{array}{|c|c|c|c|c|c|c|c|c|}
\hline	
 \text{} & X_1 & X_2 & X_3 & X_4 & X_5 & X_6 & X_7 & X_8 \\ \hline
 X_1 & 0 & X_1-2 X_4 & 0 & 2 X_2+X_3 & -X_6 & X_5 & -X_8 & X_7 \\ \hline
 X_2 & 2 X_4-X_1 & 0 & 0 & X_4 & -X_5 & 0 & 0 & X_8 \\ \hline
 X_3 & 0 & 0 & 0 & 0 & X_5 & X_6 & -X_7 & -X_8 \\ \hline
 X_4 & -2 X_2-X_3 & -X_4 & 0 & 0 & -X_6 & 0 & -X_8 & 0 \\ \hline
 X_5 & X_6 & X_5 & -X_5 & X_6 & 0 & 0 & X_4-X_1 & X_3-X_2 \\ \hline
 X_6 & -X_5 & 0 & -X_6 & 0 & 0 & 0 & -X_2-2 X_3 & -X_4 \\ \hline
 X_7 & X_8 & 0 & X_7 & X_8 & X_1-X_4 & X_2+2 X_3 & 0 & 0 \\ \hline
 X_8 & -X_7 & -X_8 & X_8 & 0 & X_2-X_3 & X_4 & 0 & 0 \\ \hline
\end{array}}
\end{align}
Note that $\{X_1, X_7, X_8\}$, $\{X_1, X_5, X_6\}$  and $\{X_1, X_3, X_7, X_8\}$
are subalgebras. The subalgebra $\{X_3, X_6\}=\La{a}(1)$ is non-compact; thus,
the ambient algebra is non-compact as well.
We show that the Lie algebra generated by \eqref{E:generators2MST} is $\La{sl}(3)$ in Section \eqref{sec:MasslessScalarTheory}. The only scalar symmetry
$X_3$ has a prolongation that acts as
\begin{align}\label{E:pr1X3L1d}
	\on{pr}^{(1)}X_3(\mc{L})=(\phi \partial_\phi + \phi' \partial_{\phi'})(\mc{L})=2\mc{L},
\end{align}
and is hence not a variational symmetry, because the Euler operator of the result of \eqref{E:pr1X3L1d} is evidently not zero: $E(2\mc{L})=2E(\mc{L})$ is a multiple of the Euler-Lagrange-expression, which is not identically zero. Of course, the expression vanishes when assuming that the Euler-Lagrange equation holds; however we do not assume this on the level of variational (also called "off-shell") symmetries.
\subsubsection{Free, massless scalar theory}
\label{sec:MasslessScalarTheory}
In case  
\begin{align}
	m=\lambda=0
\end{align}
the solutions of the determining equations \eqref{E:detEqsPhi4end} still yield a Lie algebra that is 8d, with generators
\begin{align}\label{E:generatorsMasslessScalarTh}
	X_1 &=  \partial_x, \nonumber \\
X_2 &= x\partial_x, \nonumber \\
X_3 &= x^2\partial_x +x\phi \partial_\phi, \nonumber \\
X_4 &= \phi \partial_x, \nonumber \\
X_5 &=x \phi \partial_x +\phi^2 \partial_\phi, \nonumber \\
	X_6 &= \partial_\phi , \nonumber \\
	X_7 &= \phi \partial_\phi, \nonumber \\
	X_8 &= x\partial_\phi,
\end{align}
and commutator table
\begin{align}\label{T:commm=0}
%\begin{center}
	%\left[
\small{	\begin{array}{|c|c|c|c|c|c|c|c|c|}
	 \hline
 \text{} & X_1 & X_2 & X_3 & X_4 & X_5 & X_6 & X_7 & X_8 \\
 \hline X_1 & 0 & X_1 & 2 X_2+X_7 & 0 & X_4 & 0 & 0 & X_6 \\
 \hline X_2 & -X_1 & 0 & X_3 & -X_4 & 0 & 0 & 0 & X_8 \\
 \hline X_3 & -2 X_2-X_7 & -X_3 & 0 & -X_5 & 0 & -X_8 & 0 & 0 \\
 \hline X_4 & 0 & X_4 & X_5 & 0 & 0 & -X_1 & -X_4 & X_7-X_2 \\
 \hline X_5 & -X_4 & 0 & 0 & 0 & 0 & -X_2-2 X_7 & -X_5 & -X_3 \\
 \hline X_6 & 0 & 0 & X_8 & X_1 & X_2+2 X_7 & 0 & X_6 & 0 \\
 \hline X_7 & 0 & 0 & 0 & X_4 & X_5 & -X_6 & 0 & -X_8 \\
 \hline X_8 & -X_6 & -X_8 & 0 & X_2-X_7 & X_3 & 0 & X_8 & 0 \\
 \hline
\end{array} }%\right]
% \end{center}
\end{align}
We again note that the algebra has an $\La{a}(1)=\{X_1,X_2\}$ subalgebra, so
the 8d algebra cannot be a compact algebra, because the subalgebra $\La{a}(1)$ is not compact. This algebra is $\La{sl}(3)$ (cf.~Olver's no 6.8 basis, e.g.~in \cite{laine2003classification}). 

Lie himself showed that a second-order equation with an 8d algebra, is always equivalent to the massless equation $\phi''=0$.
Equation \eqref{E:phi4_1D} with $\lambda=0, m^2 \ne 0$ is equivalent to the massless equation
$\hat{\phi}''=0$ through the point transformation
(see e.g.~\cite{mahomed2007symmetry})
\begin{align}\label{E:mapTomassless}
	\hat{x}= \tan(mx), \quad\hat{\phi}= \frac{\phi}{\cos(mx)}, 
\end{align}
which makes the massless equation hold if and only if the massive equation
holds, as long as $\cos(mx)\ne 0$. Note that the transformation \eqref{E:mapTomassless} is not a symmetry, because the structure of the equation is not conserved.

Moreover, the algebra corresponding to \eqref{E:generators2MST} and \eqref{T:comm_mne0} is $\La{sl}(3)$, which is the same algebra as in the case $m=0$. First, it is the only possible 8d algebra for any second-order ODE. Second,
the generators \eqref{E:generators2MST} may be transformed to generators 
yielding the same commutator table as for $m=0$ \eqref{T:commm=0}, by a basis shift:

By applying the transformations \eqref{E:mapTomassless} and their
inverses
\begin{align}\label{E:mapTomasslessInv}
x=\frac{\arctan(\hat{x})}{m},	\quad \phi=\frac{\hat{\phi}}{\sqrt{1+\hat{x}^2}}, 
\end{align}
in combination with chain rules
\begin{align}
	\partial_x &=(\partial_x \hat{x})\partial_{\hat{x}}+(\partial_x\hat{\phi})\partial_{\hat{\phi}}, \nn \\
	\partial_\phi &=(\partial_\phi \hat{x})\partial_{\hat{x}}+(\partial_\phi \hat{\phi})\partial_{\hat{\phi}},
\end{align}
we obtain a correspondence between the generators \eqref{E:generators2MST} related to $m\ne 0$, and the generators \eqref{E:generatorsMasslessScalarTh} corresponding to $m= 0$. Here, we will denote the latter as hatted quantities, consistent with the transformation \eqref{E:mapTomassless}. For instance,
\begin{align}
	X_1 &=(1/m)\partial_x=\frac{1}{\cos^2(mx)}(\partial_{\hat{x}}+\phi\sin(mx)\partial_{\hat{\phi}}) \nn \\
	&=(1+\hat{x}^2)(\partial_{\hat{x}}+\frac{\hat{x}\hat{\phi}}{1+\hat{x}^2}\partial_{\hat{\phi}}) \nn \\
	&=\partial_{\hat{x}}+ \hat{x}^2\partial_{\hat{x}}+\hat{x}\hat{\phi}\partial_{\hat{\phi}}\nn \\
	&=\hat{X}_1 +\hat{X}_3,
\end{align}
where the last line refers to the generators of \eqref{E:generatorsMasslessScalarTh}.
Subsequently, the generators of \eqref{E:generators2MST} and \eqref{E:generatorsMasslessScalarTh}
are connected through 
\begin{align}
	X=T \hat{X},
\end{align}
where the matrix $T$ equals
\begin{align}
	T=\left(
\begin{array}{cccccccc}
 1 & 0 & 1 & 0 & 0 & 0 & 0 & 0 \\
 0 & 1 & 0 & 0 & 0 & 0 & 0 & 0 \\
 0 & 0 & 0 & 0 & 0 & 0 & 1 & 0 \\
 0 & 0 & 1 & 0 & 0 & 0 & 0 & 0 \\
 0 & 0 & 0 & 1 & 0 & 0 & 0 & 0 \\
 0 & 0 & 0 & 0 & 1 & 0 & 0 & 0 \\
 0 & 0 & 0 & 0 & 0 & 1 & 0 & 0 \\
 0 & 0 & 0 & 0 & 0 & 0 & 0 & 1 \\
\end{array}
\right),
\end{align}
and we can express the generators \eqref{E:generators2MST}
in a new basis given by
\begin{align}
	{X}'= T^{-1}X.
\end{align}
Then, a calculation of the commutation table in the basis ${X}'$, shows that the table is identical to that in \eqref{T:commm=0}, the $m=0$ case. Hence, we have shown explicitly how the Lie symmetry algebra of the free, massive equation
is the same as that of the corresponding massless equation, namely $\La{sl}(3,\N{R})$.

As in the massive case, the generator $X_7=\phi \partial_\phi$ will correspond to a non-variational symmetry because its prolongation will again map $\mc{L}$ to a multiple of itself. The new scalar symmetry $X_6=\partial_\phi$ equals its own prolongation and annihilates
$\mc{L}=(1/2)(\phi')^2$, and is thus a strict variational symmetry. The accordance with Theorem \ref{P:prXannTannV}  is trivial, because $V=0$.
\subsection{$\phi^4$ theory in (3+1)d spacetime}
\label{sec:Phi4TheoryIn31DSpacetime}
Real scalar $\phi^4$ theory in 4d has a Lagrangian density
\begin{align}
	\mathcal{L} = \frac{1}{2} \partial_\mu \phi \, \partial^\mu \phi - \frac{1}{2} m^2 \phi^2 - \frac{\lambda}{4!} \phi^4,
\end{align}
with a nonlinear field equation
	\begin{align}\label{E:phi4_4D}
  \partial_\mu \partial^\mu \phi + m^2 \phi +\frac{\lambda}{6} \phi^3=0,
\end{align}
for $\phi=\phi(x)$. We may now analyze the symmetries of \eqref{E:phi4_4D}, for example by applying \texttt{MathLie} \cite{baumann2013symmetry} or \texttt{SYM} \cite{dimas2005sym}. 
Equation \eqref{E:phi4_4D} then yields 19 determining equations.
\subsubsection{The case $m,\lambda\ne 0$}
\label{sec:TheCaseMLambdaNe0}
In the case $m,\lambda \ne 0$, the Lie symmetry algebra consists of 10 generators, the first six of which correspond to the Lorentz algebra $\La{so}(1,3)$,
\begin{align}\label{E:LorentzAlg}
	X_1 &= x_0 \partial_{x_1}  + x_1 \partial_{x_0},  \nn \\
	X_2 &= x_0 \partial_{x_2}  + x_2 \partial_{x_0},  \nn \\
	X_3 &= x_0 \partial_{x_3}  + x_3 \partial_{x_0},   \nn \\
	X_4 &=  x_2 \partial_{x_3} - x_3 \partial_{x_2},     \nn \\
	X_5 &=  x_3 \partial_{x_1} - x_1 \partial_{x_3},    \nn \\
X_6 &=   x_1 \partial_{x_2}  -x_2 \partial_{x_1},
\end{align}
where the first three generators are the boosts in $x_1, x_2$ and $x_3$ directions, and the last three are the rotations about the $x_1, x_2$ and $x_3$ axes, respectively.
Together with the four translations
\begin{align}\label{E:PoincareTrans}
  		X_7 &= \partial_{x_0},\nn \\
	X_8 &=  \partial_{x_1},  \nn \\
	X_9 &= \partial_{x_2},  \nn \\
	X_{10} &=\partial_{x_3},
\end{align}
\eqref{E:LorentzAlg} yields the Poincaré algebra $\La{iso}(1,3)$, which is the symmetry algebra for the case $m,\lambda\ne 0$.
\subsubsection{The case $m=0$ and $\lambda\ne 0$}
\label{sec:TheCaseMu0lne0}
In the case of $\phi^4$ theory in (3+1)d spacetime with $m=0$ and $\lambda\ne 0$
\texttt{SYM} \cite{dimas2005sym} reveals that the symmetry algebra $\La{iso}(1,3)$ is enhanced with the five generators\footnote{Thanks to Stylianos Dimas for providing a tutorial Mathematica notebook demonstrating equations \eqref{E:LorentzAlg}--\eqref{E:phi^4_(3+1)d_mu=0} and
\eqref{E:phi^4_(3+1)d_lambda=0}--\eqref{E:arbFu4d} using \texttt{SYM}.}
\begin{align}\label{E:phi^4_(3+1)d_mu=0}
	X_{11} &= \Sigma_{i=1}^4 x_i\partial_{x_i} -\phi \partial_\phi         \nn \\
	X_{12} &= - 2 \phi x_{0} \, \partial_{\phi} + 2 x_{0} x_{1} \, \partial_{x_1} + 2 x_{0} x_{2} \, \partial_{x_2} + 2 x_{0} x_{3} \, \partial_{x_3} + (x_{0}^2 + x_{1}^2 + x_{2}^2 + x_{3}^2) \, \partial_{x_0},        \nn \\
	X_{13} &= - 2 \phi x_{1} \, \partial_{\phi} + 2 x_{0} x_{1} \, \partial_{x_0} + 2 x_{1} x_{2} \, \partial_{x_2} + 2 x_{1} x_{3} \, \partial_{x_3} + (x_{0}^2 + x_{1}^2 - x_{2}^2 - x_{3}^2) \, \partial_{x_1},        \nn \\
	X_{14} &=   - 2 \phi x_{2} \, \partial_{\phi} + 2 x_{0} x_{2} \, \partial_{x_0} + 2 x_{1} x_{2} \, \partial_{x_1} + 2 x_{2} x_{3} \, \partial_{x_3} + (x_{0}^2 - x_{1}^2 + x_{2}^2 - x_{3}^2) \, \partial_{x_2},     \nn \\
	X_{15} &=  - 2 \phi x_{3} \, \partial_{\phi} + 2 x_{0} x_{3} \, \partial_{x_0} + 2 x_{1} x_{3} \, \partial_{x_1} + 2 x_{2} x_{3} \, \partial_{x_2} + (x_{0}^2 - x_{1}^2 - x_{2}^2 + x_{3}^2) \, \partial_{x_3}.              
\end{align}
 
Scaling symmetries such as $X_{11}$ are known not to be preserved in the quantum theory, due to the regulator introducing a scale. 
We now demonstrate that the scaling symmetry generated by $X_{11}$ is, nevertheless, strictly variational, in contrast to the situation in the 1d case. The generator $X_{11}$ corresponds to infinitesimal symmetry transformations
\begin{align}\label{E:xAndphiScaleTrafoPhi4}
	x^\mu\to (1+\epsilon)x^\mu, \quad \phi\to (1-\epsilon)\phi.  
\end{align}
Then the derivative transforms as 
\begin{align}
	\frac{\partial \phi}{\partial x^\mu}\to \frac{(1-\epsilon)\partial \phi}{(1+\epsilon)\partial x^\mu}=(1-2\epsilon)\frac{\partial \phi}{\partial x^\mu},
\end{align}
while the measure transforms as
\begin{align}
	d^4x\to (1+4\epsilon)d^4x
\end{align}
to first order in the infinitesimal parameter $\epsilon$, cf.~\eqref{E:xAndphiScaleTrafoPhi4}.
Then, suppressing $\epsilon$, $\delta \phi =-\phi$, $\delta x= x$,
$\delta ({\partial_\mu \phi})= -2{\partial_\mu \phi}$ and $\delta (d^4x)=4d^4x$, which gives a first-order variation 
\begin{align}
	\frac{\delta (\mc{L}\, d^4x)}{d^4x}= \frac{\partial \mc{L}}{\partial \phi}\delta \phi+
	\frac{\partial \mc{L}}{\partial (\partial_\mu \phi)}\delta ({\partial_\mu \phi})+\frac{\mc{L}\delta (d^4x)}{d^4x}
	=\frac{\lambda \phi^4}{6}-2\partial_\mu \phi \partial^\mu \phi+4\mc{L}=0,
\end{align}
of the Lagrangian of $\phi^4$ theory.
We obtain the same result by applying the 1-prolongation of $X_{11}$, 
\begin{align}
	\operatorname{pr}^{(1)}X_{11}=x^\mu \partial_\mu-\phi \partial_\phi
	-2(\partial_\mu \phi) \frac{\partial}{\partial(\partial_\mu \phi)},
\end{align}
to $\mc{L}$, because the definition \eqref{E:condStrictVarSym} of a strict variational symmetry then holds, 
\begin{align}\label{E:strictVarSymX11}
	\operatorname{pr}^{(1)}X_{11}(\mc{L})+\mc{L}\, d_\mu\xi^\mu=-4\mc{L}+4\mc{L}=0,
\end{align}
where $\xi^\mu=x^\mu$, in contrast to the 1d case \eqref{E:prX21d}. There are no scalar symmetries, as in the case of the corresponding 1d theory in Section \ref{sec:MasslessPhi4Theory}. This is in accordance with Corollary \ref{C:1d4dcorr}, which states that scalar variational symmetries will be the same for the 1d and 4d theories.
\subsubsection{The case $m\ne 0$ and $\lambda=0$}
\label{sec:TheCaseMune0l0}
In the case of $\phi^4$ theory in (3+1)d spacetime with $m \ne 0$ and $\lambda = 0$
\texttt{SYM} \cite{dimas2005sym} shows the symmetry algebra $\La{iso}(1,3)$ is enhanced by the generators
\begin{align}\label{E:phi^4_(3+1)d_lambda=0}
	X_{11} &= \phi \partial_\phi, \nn \\
	X_{12} &= \mathcal{F}_{1}(x) \, \partial_{\phi},
\end{align}
where $\mathcal{F}_{1}(x)$ is an arbitrary function of the spacetime variables, satisfying 
\begin{align}\label{E:arbFu4d}
	\partial_\mu\partial^\mu \mathcal{F}_{1}+m^2\mathcal{F}_{1}=0.
\end{align}
Hence the symmetry algebra in this case becomes infinite dimensional, with $\La{iso}(1,3)$ as a finite subalgebra. The only scalar symmetry $X_{11}$ here is the same as for the 1d case in Section \ref{sec:FreeMassiveScalarTheory}. By
Corollary \ref{C:1d4dcorr}, the scalar, variational symmetries of the 4d theory must be the same as those of the 1d theory, which can easily be confirmed: As for the 1d case,
the prolongation of $X_{11}$ maps $\mc{L}$ to a multiple of itself, and is hence again non-variational. 

\subsubsection{The case $m=\lambda=0$}
\label{sec:TheCaseMu0l0}
In the case $m=\lambda=0$, the Poincaré algebra $\La{iso}(1,3)$ is enhanced by both sets of generators given in 
\eqref{E:phi^4_(3+1)d_mu=0} and \eqref{E:phi^4_(3+1)d_lambda=0}; hence 
the algebra once again becomes infinite-dimensional. The symmetry $X_{11}= \phi \partial_\phi$ will still map
$\mc{L}$ to a multiple of itself, as in Section \ref{sec:MasslessScalarTheory}, and is hence non-variational as before.
 The 4d, scalar symmetry is thus of the same nature as in the 1d case, in harmony with Corollary \ref{C:1d4dcorr}.

No divergence symmetries were found in the analysis in Section \ref{sec:ASimpleExamplePhi4Theory}, but they
occur e.g.~in scalar gauge singlet extensions of the SM \cite{singletExtensionsSM}.
Finally, we also note that all three cases, $m=0$, $\lambda=0$ and $\lambda=m=0$ may be imposed by symmetries, as all cases correspond to distinct symmetries.

%%%%%%%%%%%%%%%%%%%%%%%%%%%%%%%%%%%%%%%%%%%%%%%%
\section{2HDM}
\label{sec:2HDM}
%%%%%%%%%%%%%%%%%%%%%%%%%%%%%%%%%%%%%%%%%%%%%%%%
The Lagrangian of a 2HDM
is defined by \eqref{E:L_NHDM+KS} with $N=2$ and $K=0$,
\begin{align}\label{E:2HDMlag}
	\mc{L}
= -\tfrac{1}{4}W^a_{\mu\nu}W^{a\mu\nu}
  -\tfrac{1}{4}B_{\mu\nu}B^{\mu\nu}
  + \sum_{i=1}^2 (D_\mu\Phi_i)^\dagger (D^\mu\Phi_i)
  - V(\Phi_1,\Phi_2),
\end{align}
with 
Higgs doublets
\begin{align}\label{E:HiggsDoublets2HDM}
\Phi_1 = \frac{1}{\sqrt{2}}
\begin{pmatrix}
\phi_{1} + i \phi_{2} \\
\phi_{3} + i \phi_{4}
\end{pmatrix}, \quad
 \Phi_2 = \frac{1}{\sqrt{2}}
\begin{pmatrix}
\phi_{5} + i \phi_{6} \\
\phi_{7} + i \phi_{8}
\end{pmatrix},
\end{align}
where the
covariant derivatives and gauge boson field strength tensors are given by \eqref{E:covarDerDef}-\eqref{E:Bfst}.
The most general, renormalizable 2HDM potential can be written as
\begin{align}
V(\Phi_1,\Phi_2) &= m_{11}^2\,\Phi_1^\dagger\Phi_1 + m_{22}^2\,\Phi_2^\dagger\Phi_2
   -\big[ m_{12}^2\,\Phi_1^\dagger\Phi_2 + \text{h.c.}\big] \nonumber\\
 &\quad + \tfrac{\lambda_1}{2}(\Phi_1^\dagger\Phi_1)^2
   + \tfrac{\lambda_2}{2}(\Phi_2^\dagger\Phi_2)^2
   + \lambda_3(\Phi_1^\dagger\Phi_1)(\Phi_2^\dagger\Phi_2)
   + \lambda_4(\Phi_1^\dagger\Phi_2)(\Phi_2^\dagger\Phi_1) \nonumber\\
 &\quad + \Big[ \tfrac{\lambda_5}{2}(\Phi_1^\dagger\Phi_2)^2
   + \lambda_6(\Phi_1^\dagger\Phi_1)(\Phi_1^\dagger\Phi_2)
   + \lambda_7(\Phi_2^\dagger\Phi_2)(\Phi_1^\dagger\Phi_2)
   + \text{h.c.}\Big] ,
\end{align}
 where "h.c." means Hermitian conjugate, 
 i.e.~conjugate transpose, with complex parameters $m_{12}^2$, $\lambda_5$, $\lambda_6$ and $\lambda_7$, other parameters real. Applying a basis shift (i.e.~a reparametrization) of the doublets,
\begin{align}\label{E:basisTrafo2HDM}
	(\Phi_1,\Phi_2)^T\to(\hat{\Phi}_1,\hat{\Phi}_2)^T = U (\Phi_1,\Phi_2)^T
\end{align}
 with\footnote{Or with $U\in\Lg{SU}(2)$, if we absorb the complex phase of $\Lg{U}(2)$ into $\Lg{U}(1)_Y$ hypercharge symmetry.} $U\in \Lg{U}(2)$, we can diagonalize the mass-squared matrix
and remove the complex phase of e.g.~$\lambda_5$, which means that 
\begin{align}\label{E:parsDiagMM}
	m_{12}^2=0, \qquad \Im(\lambda_5)=0,
\end{align}
 in the potential $\hat{V}(\hat{\Phi}_1,\hat{\Phi}_2)=V({\Phi}_1,{\Phi}_2)$ written in the new, hatted basis.

The 24 Euler-Lagrange equations $E(\mc{L})=0$ in \eqref{E:2HDMlag}
are of the form
\begin{align}\label{E:EL2HDM}
	\frac{\partial \mc{L}}{\partial y^i}-d_\mu \frac{\partial \mc{L}}{\partial y^i_{,\mu}}=0, \quad 1\leq i \leq 24,
\end{align}
with 
\begin{align}\label{E:24fields}
	\{y^i\}_{i=1}^{24}=\{\phi_1,\ldots, \phi_8,B_0,\ldots,B_3, W_0^1,W_0^2,\ldots,W_3^3\}.
\end{align}
These equations will be variants of the interacting (i.e.,~nonlinear) Klein-Gordon and Proca equations.

\subsection{A bilinear formalism}
\label{sec:ABilinearFormalism}
The 2HDM potential may also be written in a particularly useful manner by gauge invariant bilinears, a formalism first introduced in \cite{Nagel:2004sw} and subsequently
developed for the general 2HDM in \cite{Maniatis:2007vn}.
\begin{equation}
\label{eq:Vbilinear}
V = M_0r_0 + M_a r_a + \Lambda_0 r_0^2 + L_a r_0r_a + r_a \Lambda_{ab} r_b
\end{equation}
by employing bilinears in the fields,
\begin{equation}\label{E:bilinears}
r_0 = \Phi_i^\dag \Phi_i \;, \quad r_a = \Phi_i^\dagger (\sigma_a)_{ij} \Phi_j,
\end{equation}
where $a \in \{1,2,3\}$ and bilinears $r_a$ are given in terms of the Pauli matrices $\sigma_a$, which imply
\begin{align}
	r_1 &= \frac{1}{2}(\Phi_1^\dag \Phi_2+ \Phi_2^\dag \Phi_1), \quad
		r_2 = -\frac{i}{2}(\Phi_1^\dag \Phi_2 - \Phi_2^\dag \Phi_1), \nn \\
			r_3 &= \frac{1}{2}(\Phi_1^\dag \Phi_1- \Phi_2^\dag \Phi_2).
\end{align}
The parameters of \eqref{eq:Vbilinear} are given by
\begin{align}
	M_0&=m_{11}^2+m_{22}^2, \quad \Lambda_0= \tfrac{1}{2}(\lambda_1+\lambda_2)+\lambda_3, \\
	L&=\left(-\Re(\lambda_6+\lambda_7), \Im(\lambda_6+\lambda_7),\frac{1}{2}(\lambda_2 -\lambda_1) \right)^T \\
 M&=\left(2\Re(m_{12}^2),-2\Im(m_{12}^2), m_{22}^2-m_{11}^2\right)^T, \\
	\Lambda &= 
	\left( \begin{array}[pos]{ccc}
	\lambda_4 +\Re(\lambda_5)	& -\Im(\lambda_5) & \Re(\lambda_6-\lambda_7) \\
	-\Im(\lambda_5)	& \lambda_4 -\Re(\lambda_5)& -\Im(\lambda_6-\lambda_7)\\
\Re(\lambda_6-\lambda_7)		& -\Im(\lambda_6-\lambda_7) & \frac{1}{2}(\lambda_1+\lambda_2)-\lambda_3
	\end{array} \right). \label{E:Lambda2HDM}
\end{align}
Under a change of basis
\begin{equation}\label{E:HbasisCh}
\Phi_i \rightarrow U_{ij}\Phi_j\, , \quad  U\in \Lg{SU}(2),
\end{equation}
 $r_0$ is a singlet while $r_a$ transforms under the adjoint representation of $\Lg{SU}(2)$,
\begin{equation}
r_0 \rightarrow r_0 \,, \quad r_a\rightarrow R_{ab}(U)r_b,
\end{equation}
with
\begin{equation}\label{E:R(U)}
R_{ab}(U) = \frac{1}{2}\text{Tr}(U^\dagger \lambda_a U \lambda_b).
\end{equation}
To keep the potential invariant, 
the coupling constants are transformed as follows under this change of basis:
\begin{align}
\label{E:LambdaTrafoU}
\Lambda &\to R(U) \Lambda R^T(U),\\
L &\to R(U) L, \\
M &\to R(U) M. 
\end{align}

Because $\text{Ad}_{\Lg{SU}(2)}=\Lg{SO}(3)$ all matrices of $\Lg{SO}(3)$ correspond to an $\Lg{SU}(2)$ Higgs basis transformation, and we may always diagonalize the matrix $\Lambda$ of \eqref{E:Lambda2HDM} by a basis transformation. Given an arbitrary potential, we can change the basis in such a way that
$\Lambda$
is diagonalized. This means that 
\begin{align}\label{E:diagLambdapars}
	\Im(\lambda_5)=0, \quad \lambda_7=\lambda_6,
\end{align}
in the new basis. The basis shift will, of course, depend on the values of the parameters of $\Lambda$ in the original basis, hence we have reduced the number of parameters by three.
A slightly different bilinear formalism was applied in the beautiful derivation of the strict variational symmetries of the 2HDM in
\cite{PhysRevD.77.015017}. A crucial part of the proof is the aforementioned diagonalizability of $\Lambda$, which does not carry over to the analogous matrix $\Lambda$
 for higher $N$, because $\text{Ad}_{\Lg{SU}(N)}\varsubsetneq \Lg{SO}(N^2-1)$ is dramatically smaller when $N>2$. This again makes the classification of symmetries using bilinear formalisms much harder for $N>2$.

\subsection{Finding and solving determining equations}
\label{sec:FindingAndSolvingTheDeterminingEquations}
We now find the determining equations 
\eqref{E:prDelta=0PDEsymCond} of the system of PDEs given by \eqref{E:EL2HDM}
by applying \texttt{SYM} \cite{dimas2005sym}. This yields a system of 1733
determining equations for the coefficients $\xi^\mu$ and $\eta^i$ of the
point-symmetry generator \eqref{E:infGenPoint}.

These determining equations follow from the symmetry condition
\eqref{E:prDelta=0PDEsymCond}, that is, by computing
$\pr X\big(E_i(\mathcal{L})\big)$ for all $i$ as functions on
the relevant jet space and then imposing $E(\mathcal{L})=0$ by substituting
24 of the highest-order derivatives, cf.~\eqref{E:substPhi4}. In geometric
terms, this ensures that the change of the functions $E_i(\mathcal{L})$
along the direction $\pr X$ vanishes when starting on the solution
manifold $\mc{M}_{E(\mc{L})}$, so that $\pr X$ is
tangent to $\mc{M}_{E(\mc{L})}$. Hence $X$ generates a Lie point
symmetry, since it moves us infinitesimally from one solution to another.
Finally, by demanding that the coefficients of all linearly independent
monomials in the remaining derivatives of the fields vanish, we obtain a
linear overdetermined system of PDEs for the functions $\xi^\mu$ and
$\eta^i$, cf.~\eqref{E:detEqsPhi4end}.

As we are interested only in scalar symmetries, we set
\begin{align}
	\xi^\mu&=0,\quad \text{for all}\quad 0\leq \mu \leq 3, \nn \\
	\eta^i&=0,\quad \text{for all} \quad 9\leq i \leq 24.
\end{align}
This means that the spacetime variables and gauge bosons are kept constant under the considered transformations. Then the simplest (shortest) of the remaining determining equations are
\begin{align}\label{E:onlyAffineSym}
	\partial_{\phi_j} \partial_{\phi_k} \eta^i =0,\quad \text{for all}\quad 1\leq i,j,k \leq 8,
\end{align}
and
\begin{align}
	\partial_{y^j} \eta^i =0 \quad \text{for all}\quad j>8 \wedge 1\leq i \leq 8, 
\end{align}
which means that the $\eta^i$'s are affine (or sometimes referred to as linear) in the scalar fields and do not depend on the gauge fields. 
For strict variational symmetries, \eqref{E:onlyAffineSym} is natural:
any quadratic or higher dependence
$\eta^i = \mathcal{O}(\phi^2)$ would generate higher-order monomials
(such as $\phi^5$ or $\phi^8$) in the transformed potential, which would spoil the invariance of the potential.
   
Hence, we can proceed by solving the determining equations by substituting
\begin{align}\label{E:2HDMansatz}
	\eta^i= a_i+b_{ij}\phi_j,
\end{align}
  with an implicit sum over $j$ ranging from 1 to 8 into the determining equations. We do not include explicit spacetime variables
	in \eqref{E:2HDMansatz} because we are only considering pure scalar symmetries, although it would have been a principal possibility for non-scalar symmetries. After substituting \eqref{E:2HDMansatz}, we obtain a system of 353 polynomial
equations
\begin{align}\label{E:pol2HDM}
  P_k(y^1,\ldots,y^{q}) = 0,
\end{align}
for the $q=24$ fields \eqref{E:24fields}, with $1 \leq k \leq 353$.
All derivatives of the fields were eliminated when deriving the general
determining equations for the functions $\eta^i$ and $\xi^\mu$. The
polynomials $P_k$ depend on the parameters of the Lagrangian, as well as on
the unknown constants $a_i$ and $b_{ij}$, which we wish to determine for the
different possible choices of Lagrangian parameters. For a symmetry to be
present, all equations \eqref{E:pol2HDM} must hold for all field values, in
complete analogy with the requirement that \eqref{E:detEqsPhi4end} must hold for
all values of $\phi$ (and $x$). This implies that, in each equation
\eqref{E:pol2HDM}, the coefficients of all distinct monomials in the fields
must vanish.

	More precisely, let a general monomial $\bar{m}$ in the $q=24$ fields  be written as:
	\begin{align}
		\bar{m}(n_1,\ldots,n_q)=(y^1)^{n_1}\cdots (y^{q})^{n_{q}},\quad n_j \in \N{N}_0,
	\end{align}
	where $\N{N}_0=\N{N}\cup\{0\}$.
	Then
	\begin{align}\label{E:coeffMonom=0}
		C \bar{m}(n_1,\ldots,n_q)\in P_k \Rightarrow C=0
		\end{align}
		when $C\equiv C_{n_1,\ldots,n_q}(a,b,m^2,\lambda,g,g'Y)$ is the maximal coefficient of $\bar{m}(n_1,\ldots,n_q)$, that is  
\begin{align}
	D \bar{m}(n_1,\ldots,n_q)\notin (P_k-C \bar{m}(n_1,\ldots,n_q)),
\end{align}
	for any, non-zero coefficient $D\equiv D_{n_1,\ldots,n_q}(a,b,m^2,\lambda,g,g'Y)$.	
This splitting into monomials yields 1412 equations of the form $C=0$, some of which are very simple.
	The 80 simplest equations are of this type: 
	\begin{align}\label{E:80simp2HDM}
		h\, a_i=0,\quad\text{and}\quad h\, b_{ij}=0, \quad \text{with}\quad h\in \{g,g'Y,g g'Y\},
	\end{align}
	 and may be solved with solutions 
	\begin{align}\label{E:a_i02HDM}
	a_i=0 	\quad \forall i \in \{1,\ldots, 8\},
	\end{align}
	and 
	\begin{align}\label{E:bijsubst12HDM}
	b_{ij}=0
	\end{align}
	for all $i,j$ except for indices corresponding to any of the following 24 parameters
	\begin{align}\label{E:B1_nozerobs2HDM}
	B_1=	\{&b_{1 2},\, b_{1 5},\, b_{1 6},\, b_{2 1},\, b_{2 5},\, b_{2 6},\, b_{3 4},\, 
 b_{3 7},\, b_{3 8},\, b_{4 3},\, b_{4 7},\, b_{4 8},\nn \\
&b_{5 1},\, b_{5 2},\, 
 b_{5 6},\, b_{6 1},\, b_{6 2},\, b_{6 5},\, b_{7 3},\, b_{7 4},\, b_{7 8},\, 
 b_{8 3},\, b_{8 4},\, b_{8 7}\},
	\end{align}
	parameters that did not vanish at this stage.
	We also could have considered the case $g'=0$ here, and hence investigated custodial symmetries \cite{Sikivie:1980hm}, but we will refrain
	from it, as these are not exact symmetries of the considered Lagrangian \eqref{E:2HDMlag}, and as this could have doubled the analysis.\footnote{See \cite{Grzadkowski:2010dj} for conditions for the canonical
custodial symmetry (CCS) in the 2HDM, while \cite{Pilaftsis:2011ed}
classifies non-canonical 2HDM symmetries. The Lie symmetry analysis used in
this work would detect all such symmetries, including the non-canonical
ones.}

 Equation \eqref{E:a_i02HDM} means that there cannot be any pure scalar shift symmetries of the field equations in a 2HDM. 
	We continue our process of solving the determining equations by substituting the solutions \eqref{E:a_i02HDM} and \eqref{E:bijsubst12HDM} into the 1412 determining equations, where only the parameters \eqref{E:B1_nozerobs2HDM} are kept non-zero among the $b_{ij}$'s. 
	
	At this stage, we also implement \eqref{E:parsDiagMM}, that is, set $m_{12}^2=0$ and $\Im(\lambda_5)=0$, although we just as well could have done so from the start, before calculating the Euler-Lagrange equations. Then, we end up with a system
	\begin{align}\label{E:systemRed1}
 \mc{D}_i=0		
	\end{align}
	of 548 determining equations.
	
\subsection{Parameter cases and reductions of the potential}
\label{sec:ParameterCasesAndReductionsOfThePotential}

Let a \textit{reduced potential} be a potential in which one or more parameters
have been eliminated by transforming to a specific scalar basis through a
$\Lg{U}(2)$ Higgs-basis transformation \eqref{E:basisTrafo2HDM}. Such reduced potentials have been used, for example, in early studies of CP
violation in the 2HDM \cite{Lee:1973iz}, in basis-independent analyses
\cite{Davidson:2005cw}, in the Minkowski-space formulation of the Higgs
potential \cite{PhysRevD.77.015017}, and in tensor-product-based
classifications of scalar symmetries \cite{Pilaftsis:2011ed}.

  We proceed by considering different parameter cases, and try to optimally reduce the potential in each case.
This means that we eliminate as many parameters as possible through an appropriate basis transformation, and hence, more or less, lock the basis and thereby discard reparametrization freedom. We do this to avoid that the same symmetry manifests itself in different bases of the doublets excessively, typically by Lie algebras with generators involving potential parameters. Although all these manifestations of the same symmetry will be detected by the Lie symmetry analysis (for a clarification of this, see Appendix \ref{sec:ChangesOfVariables}), we want to avoid this as much as possible, because we would otherwise have to show that two different manifestations of the same symmetry are equivalent, which may imply extra work. 
	
\subsubsection{The case $m_{11}^2\ne m_{22}^2$}
\label{sec:TheCaseM112NeM222}
In addition to the diagonalization of the mass-squared matrix, and hence the reduction \eqref{E:parsDiagMM} of the potential, we now assume that
\begin{align}
	m_{11}^2\ne m_{22}^2
\end{align}
Then the 164 simplest determining  
equations are of the forms 
\begin{align}
	(m_{11}^2-m_{22}^2) b_{ij} &= 0 \Rightarrow b_{ij}=0  \\
	h(b_{ij}\pm b_{kl}) &=0 \Rightarrow b_{ij}=\mp b_{kl} \label{E:hbb=0}
\end{align}
 for $h>0$ cf.~\eqref{E:80simp2HDM} and certain indices $i,j,k$ and $l$. Solving these equations yields:
\begin{align}\label{E:bsolsm11ne22}
	-b_{12}&=-b_{34}=b_{43} = b_{21},  \nn \\
	b_{56}&=-b_{65}=-b_{87}= b_{78}, \nn \\
	b_{ij}&=0, \quad \text{for other}\;\: i,j, 	
\end{align}
which means that we have only two free parameters: $b_{21}$ and $b_{78}$.
Inserting \eqref{E:bsolsm11ne22} into the full set of determining equations and then applying \texttt{Mathematica}'s built-in \texttt{Reduce} function, we obtain 
\begin{align}\label{E:b1m11ne22}
	b_{21}+b_{78}=0 
\end{align}
 or 
\begin{align}\label{E:b2m11ne22}
		\Re(\lambda_5)=\Re(\lambda_6)=\Re(\lambda_7)=\Im(\lambda_6)=\Im(\lambda_7)=0
	\end{align}
Here, substituting \eqref{E:b1m11ne22}
 into \eqref{E:2HDMansatz} and then into
the infinitesimal generator \eqref{E:infGenPoint}
results in a 1-dimensional algebra that is present for all\footnote{This corresponds to the fact that there are no restrictions on the parameters of the potential in \eqref{E:b1m11ne22}.} potentials, given by the generator
\begin{align}
	X_Y= -\phi_2 \partial_{\phi_1}+\phi_1 \partial_{\phi_2}-\phi_4\partial_{\phi_3}+\phi_3\partial_{\phi_4}-\phi_6\partial_{\phi_5}+\phi_5\partial_{\phi_6}-\phi_8\partial_{\phi_7}+\phi_7\partial_{\phi_8}
\end{align}
which again implies the symmetry algebra 
\begin{align}
\La{u}(1)_Y=\on{span}(X_Y), 	
\end{align}
that is, the Lie algebra of the hypercharge symmetry group $\Lg{U}(1)_Y$.
The only other possibility \eqref{E:b2m11ne22} corresponds to a 2-dimensional algebra, parametrized by the two now free parameters 
$b_{21}, b_{78}$, because there are no restrictions on $b_{21}, b_{78}$ in \eqref{E:b2m11ne22}. Equations \eqref{E:2HDMansatz} and \eqref{E:infGenPoint} then yield a symmetry algebra
\begin{align}
	\La{u}(1)_Y \oplus \La{u}(1)_\text{PQ}=\on{span}(X_Y,X_\text{PQ}),
\end{align}
where
\begin{align}\label{E:PQgen2HDM}
X_\text{PQ}= \phi_2 \partial_{\phi_1}-\phi_1 \partial_{\phi_2}+\phi_4\partial_{\phi_3}-\phi_3\partial_{\phi_4}-\phi_6\partial_{\phi_5}+\phi_5\partial_{\phi_6}-\phi_8\partial_{\phi_7}+\phi_7\partial_{\phi_8},
\end{align}
which spans the Lie algebra $\La{u}(1)_\text{PQ}$ of the Peccei-Quinn $\Lg{U}(1)$ symmetry. We demonstrate that this is a strict variational symmetry by applying the fact 
$\on{pr}X_\text{PQ}(T)=0$ (i.e.,~the Peccei-Quinn symmetry is an SVS of the kinetic terms), together with Theorem \ref{P:prXannTannV}. Alternatively, we can show
\eqref{E:condStrictVarSym} holds for potential parameters \eqref{E:parsDiagMM} and \eqref{E:b2m11ne22}, or we can simply check the consistency with the results in \cite{PhysRevD.77.015017} or 
Table 5 of \cite{Branco_2012}. 

\subsubsection{The case $m_{11}^2= m_{22}^2$}
\label{sec:TheCaseM112M222}
We now assume
\begin{align}\label{m11sq=m22sq}
	m_{11}^2= m_{22}^2,
\end{align}
and substituting \eqref{m11sq=m22sq} in the determining equations \eqref{E:systemRed1}. The 148 simplest of these equations are of the form
\eqref{E:hbb=0}, and we solve and substitute the resulting 
equations in the total system of determining equations.
 The only free parameters $b_{ij}$ in the determining equations can now be taken as:
 \begin{align}\label{E:B22HDM}
B_2=\{b_{21},b_{25}, b_{51}, b_{78}\}.	 
 \end{align}
Because of \eqref{m11sq=m22sq}, the mass-squared matrix will remain diagonal under any additional Higgs basis transformations, hence, we have the freedom to further reduce the potential, without introducing a new $m_{12}^2$ parameter. 
Therefore, we choose to diagonalize the matrix $\Lambda$ in \eqref{E:Lambda2HDM}, 
which means that
\begin{align}\label{E:parsDiagLambda}
	\lambda_6=\lambda_7, \quad \Im(\lambda_5)=0,
\end{align}
cf.~\eqref{E:diagLambdapars}, where the latter reduction $\Im(\lambda_5)=0$ is the same as before, cf.~\eqref{E:parsDiagMM}.

\paragraph{Assuming $\Re(\lambda_5)\ne 0$}
\label{sec:AssumingReLambda5Ne0}
Additionally, we will now assume
\begin{align}\label{E:reL5ne0}
	\Re(\lambda_5)\ne 0,
\end{align}
 since $\Re(\lambda_5)=0$ would also have let us reduce the potential by eliminating $\Im(\lambda_6)$.
By substituting \eqref{E:parsDiagLambda} into the determining equations  
and
solving for the parameters \eqref{E:B22HDM},
 with assumption \eqref{E:reL5ne0}, we obtain three different solutions for the parameters $B_2$, cf.~\eqref{E:B22HDM}, hence three possible Lie symmetry algebras: Let 
\begin{align}
	H_1 &=\phi_6 
   \partial_{\phi_1}-\phi _5 
   \partial_{\phi_2}+\phi _8 
   \partial_{\phi_3}-\phi _7 
   \partial_{\phi_4}+\phi _2 
   \partial_{\phi_5}-\phi _1 
   \partial_{\phi_6}+\phi _4 
   \partial_{\phi_7}-\phi _3 
   \partial_{\phi_8}, \nn \\  
	H_2&=\phi _5 
   \partial_{\phi_1}+\phi _6 
   \partial_{\phi_2}+\phi _7 
   \partial_{\phi_3}+\phi _8 
   \partial_{\phi_4}-\phi _1 
   \partial_{\phi_5}-\phi _2 
   \partial_{\phi_6}-\phi _3 
   \partial_{\phi_7}-\phi _4 
   \partial_{\phi_8}.
\end{align}
Then, the Lie algebra of the reparametrization group $\Lg{SU}(2)$
(i.e.~the basis transformations), denoted $\La{su}(2)_\text{HF}$ (where "HF" stands for "Higgs Family"), will be spanned by the latter generators and 
\begin{align}
	H_3\equiv X_\text{PQ},
\end{align}
 c.f.~\eqref{E:PQgen2HDM}, namely
 \begin{align}
	 \La{su}(2)_\text{HF}= \on{span}(H_1,H_2,H_3).
 \end{align}
	The Lie algebra in this basis has the commutation rules\footnote{Remember we are employing the mathematicians' definition of a Lie algebra, and are hence working in an anti-Hermitian basis.}
	\begin{align}
	[H_i,H_{j}]=-2\epsilon^{ijk}H_{k}.	
	\end{align}
	 Now, the first solution for the $b$'s is
\begin{align}\label{E:solIbsconds}
\text{I}:\quad	b_{25}&=0,\quad b_{78}=-b_{21}, \nn \\
	\text{when}\quad \lambda_2 &= \lambda_3+\lambda_4+\Re(\lambda_5)\wedge \lambda_1= \lambda_2
\end{align}
with $b_{51}$ free, and this holds for the displayed parameter conditions, in addition to the other conditions as in \eqref{E:parsDiagMM}.
The other two solutions are as follows:
\begin{align}
&\text{II}:\quad	b_{51}=0, \quad b_{78}=-b_{21}
&\text{III}:\quad	b_{25}=b_{51}=0, \quad b_{78}=-b_{21}
\end{align}
 where both solutions hold for several, different conditions on the potential parameters, similar to those given in \eqref{E:solIbsconds}.
	Then, the three symmetry algebras are as follows:
	\begin{align}\label{E:Lasm11sq=m22sq}
		\La{g}_\text{I} &= \on{span}(H_2,X_Y), \\
		\La{g}_\text{II} &= \on{span}(H_1,X_Y),  \\
		\La{u}(1)_Y &= \on{span}(X_Y).
	\end{align}
  We can conclude that all these algebras are strict variational symmetries, without considering all the distinct sets of conditions on the parameters of the potential: 
 The prolonged infinitesimal elements $\on{pr}X$ of all generators $X$ in \eqref{E:Lasm11sq=m22sq} will annihilate
the kinetic terms when applied to the most general $\mc{L}$. This may be verified explicitly, but it also follows from the well-known fact that $\Lg{SU}(2)_\text{HF}$ is a strict variational symmetry group of the kinetic terms of the 2HDM, see \cite{Olaussen:2010aq} for a proof that
the strict variational symmetries of the kinetic terms are $\La{su}(N)$ for a general NHDM. Then, Theorem \ref{P:prXannTannV}
guarantees that these symmetries are strictly variational, independent of the parameter constraints of the potentials.
Moreover, in Section \ref{sec:Conclusion2HDM}, we will show that the symmetries given by $\La{g}_\text{I}$ and $\La{g}_\text{II}$ are equivalent, which means that there is a new Higgs basis where $\La{g}_\text{I}$ will be the same as $\La{g}_\text{II}$ in the old basis.

%########################################################
%########################################################
%########################################################

\paragraph{Assuming $\Re(\lambda_5)= 0$}
\label{sec:AssumingReLambda50}
We now consider the special case where
\begin{align}\label{E:Rel5=02HDM}
	\Re(\lambda_5)= 0,
\end{align}
in addition to the conditions \eqref{m11sq=m22sq} and \eqref{E:parsDiagLambda}.
By a rephasing of the doublets, we may now eliminate $\Im(\lambda_6)$,
 \begin{align}\label{E:Iml6=02HDM}
	\Im(\lambda_6)= 0,
\end{align}
without introducing any new parameters.

\subparagraph{Assuming $\Re(\lambda_5)= 0$ and $\lambda_1+\lambda_2\ne 2(\lambda_3+\lambda_4)$}
\label{sec:AssumingReLambda50AndLambda1Lambda2Ne2Lambda3Lambda4}
If $\lambda_1+\lambda_2= 2(\lambda_3+\lambda_4)$ the matrix $\Lambda$ will be proportional to identity, and we can perform $\Lg{SO}(3)$ rotations without altering the elements of $\Lambda$. Then, we can freely rotate $L$ while $M$ remains zero, such that $L$ points in a desired direction, for example, in the $x$-direction, and then will $\lambda_1=\lambda_2$. Hence, we first consider the case in which:
\begin{align}
	\lambda_1+\lambda_2\ne 2(\lambda_3+\lambda_4),
\end{align}
in addition to all the previous assumptions, such as \eqref{E:Rel5=02HDM} and \eqref{E:Iml6=02HDM}.
By solving the determining equations for this case, 
we obtain three solutions for the surviving parameters $B_2$, cf.~\eqref{E:B22HDM}. In the same manner as in earlier sections, the resulting Lie symmetry algebras are 
\begin{align}\label{E:algebrasRel5l1l2ne2l3l4=0}
	\La{g}_1 &=\on{span}(H_3,X_Y),  \\
	\La{u}(1)_Y &=\on{span}(X_Y),
\end{align}
where two different solutions for the $b$'s correspond to the same algebra $\La{u}(1)_Y$, while $\La{g}_1$ is isomorphic to to 
$\La{u}(1)\oplus \La{u}(1)_Y$.
In the same manner as before, all symmetries given by \eqref{E:algebrasRel5l1l2ne2l3l4=0} are, by Theorem \ref{P:prXannTannV}, strictly variational because they annihilate the kinetic terms of the 2HDM.  

\subparagraph{Assuming $\Re(\lambda_5)= 0$ and $\lambda_1+\lambda_2= 2(\lambda_3+\lambda_4)$}
\label{sec:AssumingReLambda50AndLambda1Lambda2ne2Lambda3Lambda4}
We now consider the last case with, as explained at the beginning of the last paragraph,
\begin{align}
	\lambda_1+\lambda_2= 2(\lambda_3+\lambda_4) \wedge \lambda_1=\lambda_2,
\end{align}
in addition to \eqref{m11sq=m22sq}, \eqref{E:parsDiagLambda}, \eqref{E:Rel5=02HDM} and \eqref{E:Iml6=02HDM}.
Under these conditions we obtain two solutions for the $b$'s. In the first solution, $\Re(\lambda_6)$ is free, and the corresponding symmetry algebra is
\begin{align}
	\La{g}_1= \on{span}(H_1 ,X_Y),
\end{align}
 which again is isomorphic to $\La{u}(1)\oplus \La{u}(1)_Y$.
The second solution is that all the four parameters of $B_2$ are free, cf.~\eqref{E:B22HDM}, while $\Re(\lambda_6)=0$. In this case, the symmetry algebra
is 
\begin{align}
	\La{su}(2)_\text{HF}\oplus
	\La{u}(1)_Y= \on{span}(H_1, H_2, H_3, X_Y).
\end{align}
Both symmetry algebras correspond to strict variational symmetries, due to Theorem \ref{P:prXannTannV}.
\subsubsection{The inequivalent Lie point symmetries of the 2HDM}
\label{sec:Conclusion2HDM}
All the 2-dimensional Lie algebras found in the analysis are equivalent, because a rotation 
\begin{align}\label{E:su2RotGen}
	H_1\to H_2 \to H_3 \to H_1 
\end{align}
of $\La{su}(2)_\text{HF}$ generators will be an inner automorphism of $\La{su}(2)_\text{HF}$, which means it may be implemented by a matrix conjugation by a matrix $U\in \Lg{SU}(2)$, that is, $UH_1 U^\dag =H_2$, etc. Moreover, a conjugation by a $\Lg{SU}(2)$ matrix $U$ corresponds to a change of Higgs doublet basis, which means all the 2-dimensional symmetries we found really are the same symmetry transformation, expressed in different Higgs bases.
Here the other element $X_Y$ of the 2-dimensional algebras corresponds to a complex phase, and commutes with all other (series of) generators, including $U$, and is not affected by the automorphism.
 Finally, the rotation of generators \eqref {E:su2RotGen} is an automorphism because it conserves the commutator, and it is inner because all automorphisms of $\La{su}(2)$ are inner (its Dynkin diagram is just a dot with no non-trivial diagram automorphisms).  
Hence, we have shown that there are only three possible, inequivalent scalar Lie point symmetry algebras of the 2HDM, namely
\begin{align}\label{E:conclusion2HDMsymmetries}
	\La{su}(2)\oplus \La{u}(1)_Y, \quad \La{u}(1)\oplus \La{u}(1)_Y,\quad \La{u}(1)_Y
\end{align}
and they are all strictly variational according to Theorem \ref{P:prXannTannV}, consistent with the results in 
\cite{PhysRevD.77.015017} and \cite{Pilaftsis:2011ed}. Gauge symmetries are not included in \eqref{E:conclusion2HDMsymmetries} because we considered only (purely) scalar symmetries. In addition, we have demonstrated something new, namely, that there are no scalar divergence or scalar non-variational Lie point symmetries in the 2HDM, 
as Lie's method finds all symmetries of systems of PDEs,
cf.~\eqref{E:prDelta=0PDEsymCond}. The absence of divergence and non-variational Lie point symmetries in the 2HDM stands in contrast to the situation in singlet extensions of the SM, where such symmetries do appear \cite{singletExtensionsSM}. 

Furthermore, we have tested the soundness of our implementation of Lie symmetry analysis for models with many parameters, variables and reparametrization freedom by applying the method to a well-understood example and replicating well-known results. Of course, the derivation in \cite{PhysRevD.77.015017} of the strict variational point symmetries of the 2HDM is more effective, transparent and elegant, and includes discrete symmetries in the same proof. However, the method applied in this section has the advantage that it is universal and detects all divergence and non-variational symmetries as well as strict variational symmetries. Lie symmetry analysis could, at least in principle, be applied to any model, including any NHDM, in contrast to methods that only consider the possible symmetry transformations of gauge invariant scalar bilinears, cf.~\eqref{E:bilinears}. The method applied in this study also yields the exact parameter conditions for all symmetries, although this may not be an essential feature in a classification of symmetries.

%%%%%%%%%%%%%%%%%%%%%%%%%%%%%%%%%%%%%%%%%%%%%%%%
\section{Summary and outlook}
\label{sec:SummaryAndOutlook}
%%%%%%%%%%%%%%%%%%%%%%%%%%%%%%%%%%%%%%%%%%%%%%%%

We have demonstrated, by example, how Lie symmetry analysis of field equations
can be applied to particle physics models with many variables, parameters, and
reparametrization freedom. In our analysis of the two-Higgs-doublet model
(2HDM), we reproduced its well-known strict variational Lie point symmetries
and, for the first time, showed that the model admits neither divergence nor
non-variational Lie point symmetries. Hence, the maximal realizable Lie point symmetry algebras of the 2HDM are
generated exhaustively by the strict variational symmetries.
This observation is essential: Under the usual boundary conditions, divergence
symmetries are symmetries of the action and, provided that the path-integral
measure is invariant, they should be preserved under quantization in
the same way as strict variational symmetries.
  
Moreover, we proved three general results that can simplify 
  Lie symmetry analysis for a wide class of particle physics models. The most relevant result for the present work is Theorem \ref{P:prXannTannV}, which enables us to decide whether scalar Lie point symmetries are variational for a large class of theories, simply by considering the actions of the symmetries on the kinetic terms. This criterion is independent of the specific parameter conditions associated
with the often numerous potentials that realize the same symmetry algebra.
 The other two results, Proposition \ref{P:SimplifiedModel} and Corollary \ref{C:1d4dcorr}, were not strictly necessary for the present analysis, but they may substantially simplify the computations of variational, scalar Lie point symmetries of other, larger models of the type NHDM+KS. This is done by reducing the spacetime dimension, and hence the number of spacetime variables and gauge fields, while obtaining exactly the same results for scalar variational symmetries. Lie symmetry analysis detects only continuous (Lie) symmetries. Nevertheless, any discrete symmetries not detected by Lie's method may be identified by considering the automorphism groups of the Lie symmetry algebras computed through Lie symmetry analysis. In this way, the set of possible maximal, faithfully acting variational symmetry groups of a generic model can be determined. A systematic study of such automorphisms and their role in identifying the discrete symmetries of large quantum-physical models is left for future work.

\section*{Acknowledgments}
The author is grateful to Stylianos Dimas for providing the latest version of \texttt{SYM} \cite{dimas2005sym} and a \texttt{Mathematica} notebook demonstrating its use.

\appendix
%%%%%%%%%%%%%%%%%%%%%%%%%%%%%%%%%
\section{A proof of Proposition \ref{P:SimplifiedModel}}
\label{sec:ProofOfProposition}
The most general NHDM+KS Lagrangian in $d$ spacetime dimensions is given by
\eqref{E:L_NHDM+KS}.
 We can expand the Lagrangian \eqref{E:L_NHDM+KS} in purely scalar terms and terms linear and quadratic in the gauge fields, as follows:
\begin{align}\label{E:L_NHDM+KSgb}
	\mc{L}_d(x_0,\ldots,x_{d-1})&=\partial_\mu\Phi_n^\dag \partial^\mu \Phi_n+
	(\partial^\mu \Phi_n^\dag) G_\mu \Phi_n+\Phi_n^\dag G_\mu^\dag \partial^\mu \Phi_n+\Phi_n^\dag G_\mu^\dag G^\mu \Phi_n \nn \\
	&+\frac{1}{2}\partial_\mu s_m\partial^\mu s_m-V(\Phi,s)+T_\text{GB},
\end{align}
where 
\begin{align}
G^\mu=ig(\sigma_i/2)W^\mu_i + ig'YB^\mu	
\end{align}
represent the gauge bosons and where e.g.~$\mu \in \{0,1,2,3\}$ for $\mc{L}_4$ and $\mu=0$ for $\mc{L}_1$. Moreover, $T_\text{GB}$ is the sum of the kinetic terms of the gauge bosons
given by \eqref{E:TGB}, which is irrelevant for the proof of Proposition \ref{P:SimplifiedModel}, because we only consider scalar symmetries.
The relevant 1-prolongation of the scalar infinitesimal generator for the $d$-dimensional case is given by
\begin{align}\label{E:1prolongScalars}
%	v_1^{(1)}&=\eta_i\partial_{\phi_i}+\phi_j'(\partial_{\phi_j} \eta_i)\partial_{\phi_i'},   \\
	\on{pr}^{(1)} X_d&=\sum_{\mu=0}^{d-1} \sum_{i,j=1}^{4N+K}(\eta_i\partial_{\phi_i}+%\sum_{\mu=0}^{d-1}
	\frac{\partial \phi_j}{\partial{x_\mu}}\frac{\partial \eta_i}{\partial\phi_j}\frac{\partial}{\partial (\partial \phi_i/\partial x_\mu)}),
\end{align}
where we identify
\begin{align}\label{E:phisvsss}
	\phi_{4N+k}\equiv s_k,\quad \text{for} \; k \geq 1. 
\end{align}
The 1-prolongation \eqref{E:1prolongScalars} is of the given form since we only consider symmetries not involving $x_\mu$, i.e. 
$\xi^{\mu}=0$ and $\eta^j_{x_\mu}=0$ for all $0\leq \mu \leq d-1$ and $1\leq j\leq 4N+K$, since $\eta^j=\eta^j(\phi_1,\ldots,\phi_{4N},s_1,\ldots,s_K)$. Because we do not consider symmetries transforming the gauge bosons, $\eta$'s corresponding to the gauge fields are zero as well. 
Moreover, if $K=0$ in $\mc{L}_d(x_0,\ldots,x_{d-1})$, we have a pure NHDM, and if $N=1$ we have the SM augmented with $K$ real scalar gauge singlets.
Then, the scalar, variational symmetries of a theory given by a Lagrangian $\mc{L}_d$ are independent of $d$: 
\MainProp*
\begin{proof} 
First we consider strict variational symmetries, that is, symmetries with
$\on{pr}^{(1)}X_d\mc{L}_d=0$ (no sum over $d$), cf.~\eqref{E:condStrictVarSym} with $\xi=0$. 

 We denote %$\partial_{\phi_i} s_{j}\equiv s_{j,i}$ (with the identification \eqref{E:phisvsss}, so that $s_{j,\,(4N+i)}=\delta_{ij}$ for $i\in \N{Z}, j \in \N{N}$), 
$\partial_{\phi_i} \Phi_j\equiv  \Phi_{j,i}$, and similar for Hermitian conjugated fields. Then
e.g.~$\Phi_{1,4}^\dag =(0,-i)$. The effect of $\on{pr}^{(1)}X_d$ of \eqref{E:1prolongScalars} on $\mc{L}_d$, given by \eqref{E:L_NHDM+KSgb}, can be written as
\begin{align}\label{E:pr1vL1d4d}
	\on{pr}^{(1)}X_{d}\mc{L}_{d} &= 
	\eta_i (\partial^\mu \Phi_n^\dag) G_\mu \Phi_{n,i}+\eta_i \Phi_{n,i}^\dag G_\mu^\dag \partial^\mu \Phi_n+ \eta_i\Phi_{n,i}^\dag G_\mu^\dag G^\mu \Phi_n + \eta_i\Phi_{n}^\dag G_\mu^\dag G^\mu \Phi_{n,i}\nn \\
	&-\eta_i \frac{\partial V(\Phi,s)}{\partial \phi_i}
	+\frac{\partial \phi_j}{\partial{x_\mu}}\frac{\partial \eta_i}{\partial\phi_j}\frac{\partial \phi_i}{\partial x^\mu}
	+\frac{\partial \phi_j}{\partial{x_\mu}}\frac{\partial \eta_i}{\partial\phi_j}(\Phi_{n,i}^\dag G_\mu \Phi_n+\Phi_n^\dag G_\mu^\dag \Phi_{n,i}+\delta_{m,(i-4N)}\partial_\mu s_{m}) \nn \\
	&=0,
\end{align}
where $0\leq \mu \leq d-1$ and $1\leq K$. Assume that $\on{pr}^{(1)}X_1\mc{L}_1=0$ holds, we would like to show that then $\on{pr}^{(1)}X_d\mc{L}_d=0$ holds, with the same $\eta$'s. This means that a scalar symmetry of the $d=1$ theory also is a scalar symmetry of the theory with arbitrary dimension $d$. Now, $\eta_i {\partial_{\phi_i} V(\Phi)}{}=0$,
because all other terms are proportional to gauge bosons $G^\mu$ or
derivatives $\partial \phi_j /\partial x_\mu$ (which includes terms $\partial_\mu s_m$), and hence cannot cancel 
$\eta_i {\partial_{\phi_i} V(\Phi,s)}{}$. This implies that
$\eta_i {\partial_{\phi_i} V(\Phi)}{}=0$ in $\on{pr}^{(1)}X_d\mc{L}_d$ as well, because it is the same expression when suppressing spacetime variables.
The other terms all depend on $\mu$, and since they cancel for $\mu=0$ in $\on{pr}^{(1)}X_1\mc{L}_1$, they also have to cancel for each $\mu$ in
$\on{pr}^{(1)}X_4\mc{L}_4$, because the structure of the terms is the same for all $\mu$.

Conversely, assume $\on{pr}^{(1)}X_d\mc{L}_d=0$, that is, we have a symmetry of the theory of some fixed but arbitrary dimension $d>1$. Then, \eqref{E:pr1vL1d4d} holds where $\mu$ this time is a sum from $0$ to $d-1$. We will show that then $\on{pr}^{(1)}X_1\mc{L}_1=0$ for the same $\eta$'s, which means that the $d=1$ theory has the same symmetry. 
Again, because all other terms are proportional to gauge bosons or derivatives, $\eta_i {\partial_{\phi_i} V(\Phi)}{}=0$, and the same is true in the $d=1$ theory. Moreover, the other terms have to cancel for each $\mu$ individually, because terms involving gauge bosons $G^\mu$ cannot be canceled by terms involving gauge bosons $G^\nu$ for $\nu\ne \mu$, because they are independent fields.
Moreover, the term $\frac{\partial \phi_j}{\partial{x_\mu}}\frac{\partial \eta_i}{\partial\phi_j}\frac{\partial \phi_i}{\partial x^\mu}$
for a specific, fixed $\mu$ cannot be canceled by a term of the same type with a fixed spacetime index $\nu\ne \mu$, because $\eta_i$ does not involve spacetime derivatives of fields. No other terms can cancel these terms either. Therefore, terms containing $\mu$
in \eqref{E:pr1vL1d4d} cancel for each choice of $\mu$ separately, and hence they will cancel for $\mu=0$ in the $d=1$ theory as well, and the symmetry for the $d>1$ theory is also a symmetry for the $d=1$ theory. Thus, the scalar, SVSs for $\mc{L}_d$ and $\mc{L}_1$ are
the same. Hence, they are the same for $\mc{L}_d$ for all $d\in \N{N}$.

We now turn to the case where $\on{pr}^{(1)}X_d\mc{L}_d=\on{Div} \beta \equiv d_\mu \beta^\mu$ for some local vector fields $\beta^\mu(x,\phi,\phi_{,\alpha},\phi_{,\alpha\beta})$, which are not identically zero. 
The first-order infinitesimal variation $\delta \mc{L}_d=\on{pr}^{(1)}X_d\mc{L}_d$ is a divergence if and only if the Euler operator $E$ applied to $\on{pr}^{(1)}X_d\mc{L}_d$ is zero (cf.~\eqref{E:EtotDiv}),
%$E(\on{pr}^{(1)}X_d\mc{L}_d)=0$, 
which means that
\begin{align}
	E_j(\on{pr}^{(1)}X_d\mc{L}_d)=(\frac{\partial}{\partial y^j}-\frac{d}{dx^\mu}\frac{\partial}{\partial y^j_{,\mu}})(\on{pr}^{(1)}X_d\mc{L}_d)=0,
\end{align}
for all $1\leq j \leq (4N+K+4d)$, where the dependent variables $y^j$
are the $4N+K$ Higgs field components and the $4d$ gauge field components, and where $d$ again is the number of independent variables (the dimension).
First, assume $E(\on{pr}^{(1)}X_1\mc{L}_1)=0$, i.e.~we have a divergence symmetry in the $d=1$ theory.
The term $\eta_i {\partial_{\phi_i} V(\Phi,s)}{}$
depends only on undifferentiated scalars; hence, 
if $E_j$ corresponds to a Higgs field $y^j=\phi_j$, then 
$E_j(\on{pr}^{(1)}X_1\mc{L}_1)=0$ implies $\partial_{\phi_j}(\eta_i \partial_{\phi_i} V(\Phi,s))=0$. This is because the term $\partial_{\phi_j}(\eta_i \partial_{\phi_i} V(\Phi,s))$ cannot be canceled by any other term in $E_j(\on{pr}^{(1)}X_1\mc{L}_1)$
because they are all proportional to either
terms of the type $G_\mu$, $\partial_\mu \phi_k$ or a $\partial_\mu\partial^\mu \phi_k$
(all with $\mu=0$), in contrast to the term $\partial_{\phi_j}(\eta_i \partial_{\phi_i} V(\Phi,s))$.  
Hence, $\partial_{\phi_j}(\eta_i \partial_{\phi_i} V(\Phi,s))=0$ will also be true in the $d>1$ theory, as these terms are the same for all $d$, suppressing spacetime variables. All the remaining terms of \eqref{E:pr1vL1d4d} have obtained two factors of the type
$G_\mu$ or $\partial_\mu \phi_k$ for $\mu=0$, and regardless of which $E_i$
we apply, at least one such factor survives in each non-annihilated term; therefore, all these terms are tagged with the index $\mu=0$. Because the $d>1$ dimensional theory 
has $d$ copies of these terms, where each copy is tagged with an index $\mu$, $0\leq \mu \leq d-1$, each of the $d$ copies has to vanish separately because they are structurally the same, only distinguished by the index $\mu$. Thus, $E(\on{pr}^{(1)}X_d\mc{L}_d)=0$.

Conversely, assume $E(\on{pr}^{(1)}X_d\mc{L}_d)=0$, that is, we have a divergence symmetry in $d>1$ spacetime dimensions. By the same reason as before the terms 
$\partial_{\phi_j}(\eta_i \partial_{\phi_i} V(\Phi,s))=0$ in the $d>1$ theory, and hence the same is true in the $d=1$ theory, since those terms are the same for all choices of $d$. The remaining terms of $E(\on{pr}^{(1)}X_d\mc{L}_d)$
can be grouped into $d$ distinct, non-overlapping categories   
determined by the value of $\mu$ in factors of the types $G_\mu$,
$\partial_\mu \phi_k$ or $\partial_{\check{\mu}} \partial^{\check{\mu}} \phi_k$. 
The terms in each category must be canceled separately, since 
all terms in one category are linearly independent of any other term in another category.\footnote{\label{fnote:offShell} Note that the conditions for variational symmetries, including divergence symmetries, do not presuppose that the Euler-Lagrange equations of the Lagrangian hold. Hence, there is no relation between
$\partial_{\check{\mu}} \partial^{\check{\mu}} \phi_k$ and $\partial_{\check{\nu}} \partial^{\check{\nu}} \phi_k$ for $\check{\mu}\ne \check{\nu}$.} Hence the terms in the category $\mu=0$ cancel, and hence $E(\on{pr}^{(1)}X_1\mc{L}_1)=0$. Therefore, the scalar symmetries of $\mc{L}_d$ are the same for all $d$, because $\mc{L}_d$ has the same scalar symmetries as $\mc{L}_1$ for all $d$. Finally, we note that the arguments hold equally well if $K=0$, that is, we have a pure NHDM.
\end{proof}
We do not expect Proposition \ref{P:SimplifiedModel} to 
hold in general for non-variational symmetries, since the kinetic terms of the gauge bosons
consist of cross terms of different spacetime derivatives, and these terms will now matter, because a symmetry of the Euler-Lagrange equations
here implies that 
\begin{align}
	\on{pr}^{(2)}X_d (E(\mc{L}_d))=0 \quad \text{when}\quad E(\mc{L}_d)=0, 
\end{align}
and the kinetic terms of the gauge bosons will be involved in the "on-shell" condition $E(\mc{L}_d)=0$, while this was not the case for the "off-shell" (i.e., variational), scalar symmetries in Proposition \ref{P:SimplifiedModel}.
In contrast, the condition for a variational symmetry here is
\begin{align}
	E(\on{pr}^{(1)}X_d(\mc{L}_d))=0,
\end{align}
 with no assumption on the validity of the Euler-Lagrange equations.
Moreover, for $d=1$ the kinetic terms of the gauge bosons are zero because of the antisymmetry of the spacetime indices, whereas they are non-zero for $d>1$, which induces an asymmetry between the $d=1$ and $d>1$ cases in the on-shell condition $E(\mc{L}_d)=0$ for the two cases. Finally, the Euler-Lagrange equations imply dependencies between derivatives such as $\partial_{\check{\mu}} \partial^{\check{\mu}} \phi_k$ and $\partial_{\check{\nu}} \partial^{\check{\nu}} \phi_k$ for $\check{\mu}\ne \check{\nu}$, in contrast to the off-shell situation in Proposition \ref{P:SimplifiedModel} (cf.~footnote \ref{fnote:offShell}).

\section{Symmetries and changes of Higgs bases}
\label{sec:ChangesOfVariables}
To what extent does a change of Higgs basis change the symmetry of an equation or a Lagrangian $\mc{L}$, and how is it still detectable by Lie's method?
 As an example, consider a linear transformation of the scalars given by a matrix $S$, which is a strict variational symmetry if
\begin{align}
	\mc{L}(\phi)=\mc{L}(S\phi).
\end{align}
If we then consider a Higgs basis shift ${\phi}'=U\phi$, this possibly obfuscates a symmetry $S$, manifest in the unprimed basis $\phi$. The symmetry will still be present as an infinitesimal symmetry in the primed basis, although it may not be manifest (i.e.~it is hidden).  
For if 
\begin{align}
\mc{L}(\phi)=\mc{L}(U^\dag \phi')\equiv \mc{L}'(\phi'),	
\end{align}
then
$S'=USU^\dag $ is a symmetry of the transformed Lagrangian $\mc{L}'$ 
\begin{align}
	\mc{L}'(\phi')=\mc{L}'(S' \phi'),
\end{align}
if $S$ is a symmetry of the original Lagrangian, 
 since 
\begin{align}
	\mc{L}'(S' \phi')=\mc{L}(U^\dag US U^\dag \phi')=\mc{L}( {\phi})=\mc{L}'(\phi').
\end{align}
If $S$ is an infinitesimal symmetry, $S'$ is of course infinitesimal as well: 
For if $S=I+\epsilon X$ with $\epsilon$ an infinitesimal parameter, then $S'=I+\epsilon U X U^\dag$. Then the one-parameter group $\exp(t U X U^\dag)$, for $t\in \N{R}$, is continuous and connected to the identity, and is hence a Lie symmetry that will be detected by Lie symmetry analysis, albeit not a manifest symmetry.

\bibliographystyle{JHEP}

\bibliography{ref}

\end{document}